\newcommand{\blind}{0}
\def\dispmuskip{\thinmuskip= 3mu plus 0mu minus 2mu \medmuskip=  4mu plus 2mu minus 2mu \thickmuskip=5mu plus 5mu minus 2mu}
\def\textmuskip{\thinmuskip= 0mu                    \medmuskip=  1mu plus 1mu minus 1mu \thickmuskip=2mu plus 3mu minus 1mu}
\def\beq{\dispmuskip\begin{equation}}    \def\eeq{\end{equation}\textmuskip}
\def\beqn{\dispmuskip\begin{displaymath}}\def\eeqn{\end{displaymath}\textmuskip}
\def\bea{\dispmuskip\begin{eqnarray}}    \def\eea{\end{eqnarray}\textmuskip}
\def\bean{\dispmuskip\begin{eqnarray*}}  \def\eean{\end{eqnarray*}\textmuskip}
\newtheorem{theorem}{Theorem}
\newtheorem{lemma}{Lemma}
\newtheorem{proposition}{Proposition}
\newtheorem{remark}{Remark}
\newtheorem{ISalgorithm}{Algorithm}
\newtheorem{assumption}{Assumption}
\newcommand{\eps}{\epsilon}
\newcommand{\wh}{\widehat}
\newcommand{\wt}{\widetilde}
\def\E{\mathbb{E}}                         
\def\a{\alpha}
\def\s{\sigma}
\def\t{\theta}
\def\N{{\cal N}}
\def\ESS{\text{\rm ESS}}
\def\CT{\text{\rm CT}}
\def\Var{\mathbb{V}}
\def\IS{\text{\rm IS}}
\def\Sup{\text{\rm Supp}}
\newcommand{\bq}[1]{\begin{equation}\label{#1}}
\newcommand{\eq}{\end{equation}}
\newcommand{\bqn}{\begin{eqnarray}}
\newcommand{\eqn}{\end{eqnarray}}
\newcommand{\bqns}{\begin{eqnarray*}}
\newcommand{\eqns}{\end{eqnarray*}}
\newcommand{\dd}{\,\textrm{d}}
\newcommand{\Ee}{\textrm{\rm E}}
\newcommand{\Nn}{\textrm{\rm N}}
\newcommand{\pitilde}{\widetilde \pi}
\newcommand{\ov}{\overline}
\def\E{\mathbb{E}}
\def\eps{\varepsilon}
\def\({\left(}
\def\){\right)}
\def\@seccntformat#1{\@ifundefined{#1@cntformat}%
   {\csname the#1\endcsname\quad}  
   {\csname #1@cntformat\endcsname}
}
\let\oldappendix\appendix 
\renewcommand\appendix{%
    \oldappendix
    \newcommand{\section@cntformat}{\appendixname~\thesection\quad}
}
\begin{document}
\doublespacing
\if0\blind
{
\title{Importance Sampling Squared for Bayesian Inference and Model Choice with Estimated Likelihoods}
\author{
M.-N. Tran\thanks{The University of Sydney Business School}
\and M. Scharth\footnotemark[1]
\and M. K. Pitt\thanks{Department of Mathematics, King's College London}
\and R. Kohn\thanks{UNSW Business School, University of New South Wales}
\thanks{The research of Minh-Ngoc Tran, Marcel Scharth and Robert Kohn
was partially supported by Australian Research Council grant DP12120104014. The authors
thank Denzil Fiebig for sharing the pap smear data.}
}
\maketitle
} \fi

\if1\blind
{
  \begin{center}
   \title {Importance Sampling Squared for Bayesian Inference and Model Choice with Estimated Likelihoods}
\end{center}
  \medskip
  \maketitle
} \fi

\vspace{-8mm}
\begin{abstract}
\noindent
We propose an approach to Bayesian inference that uses importance sampling to generate the parameters for models where the
likelihood is analytically intractable but can be estimated unbiasedly. We refer to this procedure as importance sampling squared (IS$^2$),
as we can often estimate the likelihood by importance sampling or sequential importance sampling. The IS$^2$ method leads to efficient estimates of
expectations with respect to the posterior and their Monte Carlo standard errors when a suitable proposal for the parameters is available.
A key motivation for the IS$^2$ method is that we can use it as a tool for estimating the marginal likelihood (and the standard error of the estimator), irrespective of whether we use IS$^2$ or Markov chain Monte Carlo (MCMC) to estimate the model. The marginal likelihood is a  fundamental tool in Bayesian model choice, but estimating it has proved difficult or computationally expensive using other approaches. Our article formally justifies the IS$^2$ method and studies its convergence properties. We analyze the effect of estimating the likelihood on the resulting inference and provide guidelines on how to determine
the precision of the likelihood estimator in order to obtain an optimal tradeoff between computational
cost and accuracy for posterior inference on the model parameters. The advantages of the IS$^2$ procedure are illustrated empirically for a generalized multinomial logit model and a stochastic volatility model.

\footnotesize{
\bigskip
\noindent {\sc Keywords}: Efficient importance sampling, Marginal likelihood, Multinomial logit, Pseudo marginal Metropolis-Hastings, Optimal number of particles, Stochastic volatility.
}
\end{abstract}
\doublespacing

\section{Introduction} \label{Sec:introduction}
A wide range of statistical models, such as nonlinear and non-Gaussian state space models and generalized linear mixed models, lead to analytically intractable likelihood functions. When the density of the observations conditional on the parameters and a vector of latent
variables is available in closed form, we can use importance sampling (IS) and, more generally, sequential importance sampling
to estimate the likelihood unbiasedly.  Our article considers importance sampling for Bayesian inference when working with an estimated likelihood. We call this procedure importance sampling squared (IS$^2$).

A key motivation for the IS$^2$ method is that it estimates the marginal likelihood (and the standard error of this estimator) accurately and efficiently in cases where only an estimate of the likelihood is available.
The marginal likelihood is a fundamental tool for Bayesian model comparison \citep{Kass:1995}, but it is challenging to use the output of Markov Chain Monte Carlo (MCMC) methods to estimate it, especially for models with an intractable likelihoods \citep{Chib:2001,Perrakis:2014}. The IS$^2$ method obtains the marginal likelihood automatically when estimating expectations with respect to the posterior. Moreover, the method can be used as a tool for marginal likelihood estimation even if we estimate the posterior distribution by MCMC methods. In this case, it is typically straightforward to use the MCMC output to form a proposal density for the parameters to use in the IS$^2$ method to  estimate the marginal likelihood.

Our article shows that IS is still valid for estimating expectations with respect to the posterior
when the likelihood is estimated unbiasedly,
and prove a law of large numbers and a central limit theorem for these estimators. This analysis relates directly
 to the results in \cite{Fearnhead2008} and \cite{Fearnhead2010}, who considered random weight importance sampling in the context of particle filtering. Our results allow us to analyze how much asymptotic efficiency is lost
when working with an estimated likelihood by comparing the asymptotic variance obtained under IS$^2$ to the standard case where the likelihood is known.
We show that the ratio of the asymptotic variance of the IS$^2$ estimator to the asymptotic variance of the IS estimator,
which we call the inflation factor, is greater than or equal to 1,
and is equal to 1 if and only if the  likelihood is known.
The inflation factor increases exponentially with the variance of the estimator of the log-likelihood.

A critical implementation issue is the choice of the number of particles $N$ for estimating the likelihood.
A large $N$ gives a more accurate estimate of the likelihood
at greater computational cost, while a small $N$ can lead to an estimator with  a very large variance.
We provide theoretical and practical guidelines on how to select $N$ to obtain an optimal tradeoff between
accuracy and computational cost. Our results show that the efficiency of $\IS^2$
is weakly sensitive
to the number of particles around the optimal value. Moreover,
the loss of efficiency decreases at
worst linearly when we choose $N$ higher than the optimal value, whereas the efficiency can deteriorate
exponentially when $N$ deviates appreciably below its optimal value.  We therefore advocate a conservative choice of $N$ in practice.
We propose two approaches for selecting the number of particles. The first approach is static because
it selects the same number of particles for all
parameter values. The second approach is dynamic because it selects the optimal number of particles depending on the
 parameter values. We show both theoretically, in section \ref{SSS: static dynamic number of particles} of the supplementary material,
 and empirically, in section~\ref{sec:likeval1} of the supplementary material,
that the dynamic approach can be much more efficient than the static approach.


Our method relates to alternative approaches to Bayesian inference for models with intractable likelihoods. \cite{Beaumont:2003} develops a pseudo marginal Metropolis Hastings (PMMH) scheme
to carry out Bayesian inference with an estimated likelihood.
\cite{Andrieu:2009} formally study Beaumont's method and give conditions under which the chain converges. \cite{Andrieu:2010} use MCMC for inference in state
space models where the likelihood is estimated by the particle filter,
and \cite{Pitt:2012} and \cite{Doucet:2014} discuss the issue of the optimal number of particles
to be used in likelihood estimation.
The  SMC$^2$ method of \cite{Chopin:2012} is a sequential Monte Carlo procedure
for inference in space state models in which intractable likelihood contributions are estimated by random weight particle filters. The initialisation step in their algorithm corresponds to IS$^2$ for an initial set of observations using the prior as a proposal for the parameters. 

Given a statistical model with an intractable likelihood, we can choose to carry out off-line Bayesian inference using either the IS$^2$ or MCMC as in \cite{Andrieu:2010} and \cite{Pitt:2012}. There can be several advantages in following the IS$^2$ approach. First, IS$^2$ estimates the marginal likelihood.
Second, it is straightforward to obtain the MC standard error of IS$^2$ estimators since they are based on independent draws. In contrast, it can be difficult to obtain precise standard errors for estimators based on MCMC output due to autocorrelation in the Markov Chain. Third, it is straightforward to fully parallelize the IS$^2$ procedure so that it can be more computationally attractive to use IS$^2$ than PMMH when it is expensive to estimate the likelihood. Fourth, it is simple to implement variance reduction methods such as antithetic sampling, stratified mixture sampling and control variates for IS$^2$, as well as to use randomized Quasi-Monte Carlo techniques to improve numerical efficiency; see the applications in section~\ref{Sec:Examples}. Fifth, MCMC requires computationally expensive burn-in draws and assessing whether the Markov chain has converged and
loses the information from rejected values (in standard implementations). IS$^2$  uses all the draws from the proposal for the parameters. Sixth, high variance likelihood estimates can lead the PMMH Markov Chain to get trapped in certain regions of the extended sampling space, while isolated importance weights in IS$^2$ can be directly diagnosed and stabilised using methods such as Pareto smoothed importance sampling \citep{Vehtari2015}. Finally, IS$^2$ can deal with a multimodal posterior while PMMH methods can often get trapped in local modes.

Although the SMC$^2$ method is designed for sequential updating, it can also be used effectively for off-line inference to deal with multimodal posteriors, for example.
However, for off-line inference IS$^2$ has advantages over SMC$^2$.  First, the sequential nature of the SMC$^2$ algorithm can lead to a large implementation and computational effort compared to IS$^2$. Second, IS$^2$ provides estimates of the Monte Carlo standard errors for the posterior estimates, while it is not easy to do so with SMC$^2$. Third, the IS$^2$ method can take advantage of efficient off-line methods to estimate the likelihood, e.g. in state space models, which can outperform online likelihood estimation, as in SMC$^2$, by orders of magnitude \citep[e.g.][]{PEIS}. Third, because it is sequential, SMC$^2$ can perform poorly in time series models with structural breaks or other interventions, unlike the PMMH and IS$^2$ methods. However, we view IS$^2$ and SMC$^2$ as being {\em complementary} when used for prediction: we can use SMC$^2$ for real time updating after starting from off-line IS$^2$ estimates.

We illustrate the IS$^2$ method in empirical applications for the generalized multinomial logit (GMNL)
model of \cite{fklw2010} and a two factor stochastic volatility (SV) model with leverage effects.
Our results for the GMNL model show that the IS$^2$ approach is numerically more efficient than PMMH.
In the standard implementation, the IS$^2$ method leads 65-89\% lower Monte Carlo mean-squared errors (MSEs) for estimating the posterior means compared to PMMH. When incorporating Quasi-Monte Carlo techniques, we obtain 87\% to 99\% reductions in MSE over the PMMH method. We show that the method for optimally selecting the number of particles in estimating the GMNL model leads to
improved performance. The IS$^2$ method accurately estimates the posterior distribution under our optimal implementation.

The SV application is based on daily returns of the S\&P 500 index between 1990 and 2012. We show that the variance of the log-likelihood estimates based on the particle efficient importance sampling method of \cite{PEIS} is small for this problem, despite the long time series. Hence, IS$^2$ leads to highly accurate estimates of the posterior expectations for this example in a short amount of computing time. As suggested by the theory, the efficiency of the IS$^2$  method is insensitive to the choice of $N$ in this example. As few as two particles for estimating the likelihood (including an antithetic draw) lead to an efficient procedure for this model, highlighting the practical convenience of the approach.

There is an online supplement to the article which contains results that complement those in the main paper.
References to the main paper are of the form section~2, equation~(2) and assumption~2, etc,
whereas for the supplementary material we use section~S2, equation (S2) and assumption S2, etc.

\section{Latent variable models}\label{Sec:latent variable models}
This section sets out the main class of models with latent variables that we wish to estimate.
However, the application of the method presented in this article is not limited to this class of models and section~\ref{sec:sv} applies the
IS$^2$ method to a time series stochastic volatility model. The method can also be applied in a variety of other contexts, including
panel data estimation for large datasets using the subsampling approach of \cite{quiroz:2015}.

Suppose there are $n$ individuals. Individual $i$ has $T_i$ observations
$y_{i}=\{y_{i1},...,y_{iT_i}\}$,  and for each  $y_{it}$, there is a vector of covariates
 $x_{it}$. Let $y = \{y_{1}, \dots, y_n \}$. We associate a latent vector $\alpha_{i}$ with individual $i$ and denote the
 vector of unknown parameters in the model by $\theta$. Assuming that the individuals are independent, the joint density of the latent
variables and the observations is
\beqn
p(y,\alpha|\theta) =\prod_{i=1}^{n}p_i(y_{i},\alpha_{i}|\theta)=\prod_{i=1}^{n}p_i(y_{i}|\alpha_{i},\theta)p(\alpha_{i}|\theta).
\eeqn
By assigning a prior $p(\theta)$ for $\theta$, standard MCMC approaches can be used to sample from the joint posterior
$p(\alpha,\theta|y)\propto p(y,\alpha|\theta)p(\theta)$
by cycling between $p(\alpha|\theta, y)$ and $p(\theta|\alpha, y )$. Typically, such models are estimated by introducing auxiliary latent variables
as in \cite{albert:chib:1993, polson:scott:windle:2013} for example.
These methods require computationally expensive burn-in draws and the assessment of the convergence of the Markov chain.
More importantly, they also suffer from the problem of slow mixing and are inefficient when the latent vector $\alpha$ is high dimensional or
when the data is unbalanced  \citep[see][]{johndrow2016inefficiency}.

In this article, we are interested in inference when the likelihood is analytically intractable
and given by
\beqn
p(y|\theta)   =\prod_{i=1}^{n}p_i(y_{i}|\theta),\quad p_i(y_{i}|\theta)   =\int p_i(y_{i}|\alpha_{i},\theta)p(\alpha_{i}|\theta)d\alpha_{i}.
\eeqn
Section \ref{Sec:Examples} considers more specific structures. We will use an
unbiased IS estimator $\widehat{p}_{N}(y|\theta)$ of $p(y|\theta)$ based upon a simulation sample size of
$N$, which we shall call the number of particles.
Let $h_i(\alpha_i|y,\t)$ be an importance density for $\a_i$.
The density $p_i(y_i|\t)$ is estimated unbiasedly by
\beq\label{e:IS_estimator}
\wh p_{N,i}(y_i|\t)=\frac{1}{N}\sum_{j=1}^Nw_i(\a_i^{(j)},\t),\;\;w_i(\a_i^{(j)},\t)=\frac{p(y_i|\a_i^{(j)},\t)p(\a_i^{(j)}|\t)}{h_i(\a_i^{(j)}|y,\t)},\;\;\a_i^{(j)}\stackrel{iid}{\sim}h_i(\cdot|y,\t).
\eeq
Hence, $\widehat{p}_{N}(y|\theta) =\prod_{i=1}^{n}\widehat{p}_{N,i}(y_{i}|\theta)$
is an unbiased estimator of the likelihood $p(y|\t)$.
We note that it is possible to use a different number of particles $N_i$ for each individual $i$ and we shall do so in the empirical examples in section~\ref{Sec:Examples}. We also consider the particle filter applied to state
space models in section~\ref{sec:sv}, which again yields an unbiased estimator of the
likelihood $\widehat{p}_{N}(y|\theta)$.

\section{The $\IS^2$ method}\label{Sec:IS}
Let $p(\t)$ be the prior for $\theta$, $p(y|\t)$ the likelihood
and $\pi(\t)\propto p(\t)p(y|\t)$ the posterior distribution defined on the space $\Theta\subset\mathbb{R}^d$.
Bayesian inference typically requires computing an integral of the form
\beq\label{e:integral}
\E_\pi(\varphi)=\int_\Theta\varphi(\t)\pi(\t)d\t,
\eeq
for some function $\varphi$ on $\Theta$ such that the integral \eqref{e:integral} exists.
When the likelihood can be evaluated, IS
is a popular method for estimating this integral \citep[see, e.g.,][]{Geweke:1989}.
Let $g_\IS(\t)$ be such an importance density.
Then,  the IS estimator of $\E_\pi(\varphi)$ is
\beq\label{e:standardISestimator}
\wh\varphi_\IS= \frac{\frac1{M}\sum_{i=1}^{M}\varphi(\t_i)w(\t_i)}{\frac1{M}\sum_{i=1}^{M}w(\t_i)},\;\;\text{with weights}\;\;w(\t_i)=\frac{p(\t_i)p(y|\t_i)}{g_\IS(\t_i)},\;\;\t_i\stackrel{iid}{\sim}g_\IS(\t).
\eeq
By \cite{Geweke:1989} and Remark~\ref{remark: IS scheme} below, $\wh\varphi_\IS\stackrel{a.s.}{\longrightarrow}\E_\pi(\varphi)$
as $M\to\infty$,
and a central limit theorem also holds for $\wh\varphi_\IS$.

When the likelihood cannot be evaluated,
standard IS cannot be used because the weights $w(\t_i)$ in \eqref{e:standardISestimator} are unavailable.
Algorithm~\ref{alg: IS2 alg} presents the $\IS^2$ scheme for estimating the integral \eqref{e:integral}
when the likelihood $p(y|\t)$ is estimated unbiasedly by $\wh p_N(y|\t)$, i.e. assuming that

\begin{assumption}\label{ass: exp}  
 $\E[\wh p_N(y|\t)]=p(y|\t)$ for every $\theta\in\Theta$, where the expectation is with respect to the random variables generated
in the process of estimating the likelihood.
\end{assumption}

\begin{ISalgorithm} [IS$^2$ algorithm] \label{alg: IS2 alg}
For $i=1,...,M, $
\begin{enumerate}
\item Generate $\t_i\stackrel{iid}{\sim}g_\IS(\t)$ and compute the likelihood estimate $\wh p_N(y|\t_i)$.
\item Compute the weight $\wt w(\t_i)={p(\t_i)\wh p_N(y|\t_i)}/{g_\IS(\t_i)}$.
\end{enumerate}
\end{ISalgorithm}
The $\IS^2$ estimator of $\E_\pi(\varphi)$ is defined as
\beq\label{e:ISestimator}
\wh\varphi_{\IS^2}:=\frac{\frac1{M}\sum_{i=1}^{M}\varphi(\t_i)\wt w(\t_i)}{\frac1{M}\sum_{i=1}^{M}\wt w(\t_i)}.
\eeq
Algorithm~\ref{alg: IS2 alg} is identical
to the standard IS scheme,
except that the likelihood is replaced by its estimate.
To see that $\wh\varphi_{\IS^2}$ is a valid estimator of $\E_\pi(\varphi)$,
we follow \cite{Pitt:2012} and write $\wh p_N(y|\t)$ as $p(y|\t)e^z$,
where $z:=\log\;\wh p_N(y|\t)-\log\;p(y|\t)$ is a random variable
whose distribution is governed by the randomness occurring when estimating the likelihood $p(y|\t)$.
Let $g_N(z|\t)$ be the density of $z$. Assumption~\ref{ass: exp} implies that
$\E(e^z|\theta)=\int_\mathbb{R} e^zg_N(z|\t)dz=1$.
We define
\begin{align}\label{e:eobietgi1}
{\pitilde}_N(\t,z) & :=p(\t)g_N(z|\t)p(y|\t)e^z/p(y) =\pi(\theta)\pi_{N}(z|\theta), \notag
\end{align}
where $\pi_{N}(z|\theta)=e^{z}g_{N}(z|\theta)$,
as the joint posterior density of $\t$ and $z$ on the extended space $\widetilde\Theta=\Theta\otimes\Bbb{R}$.
Then,  ${\pitilde}_N(\theta) = \pi(\theta)$ because
$\underline{}\int_\mathbb{R}{\pitilde}_N(\t,z)dz = p(\theta)p(y|\theta) /p(y) = \pi(\theta)$.

The integral \eqref{e:integral} can be written as
\bea\label{e:integral1}
\E_\pi(\varphi)=\int_{\wt\Theta}\varphi(\t)\pi_N(\t,z)d\t dz=\frac{1}{p(y)}\int_{\wt\Theta}\varphi(\t)\frac{p(\t)p(y|\t)e^zg_N(z|\t)}{\wt g_\IS(\t,z)}\wt g_\IS(\t,z)d\t dz,
\eea
with $\wt g_\IS(\t,z)=g_\IS(\t)g_N(z|\t)$ an importance density on $\wt\Theta$.
Let $(\t_i,z_i)\stackrel{iid}{\sim}\wt g_\IS(\t,z)$, i.e. generate $\t_i\sim g_\IS(\t)$ and then $z_i\sim g_N(z|\t_i)$.
It is straightforward to see that the estimator $\wh\varphi_{\IS^2}$ defined in \eqref{e:ISestimator}
is exactly an IS estimator of the integral defined in \eqref{e:integral1},
with importance density $\wt g_\IS(\t,z)$ and weights
\beq\label{eq:weight_IS2}
\wt w(\t_i,z_i)=\frac{p(\t_i)p(y|\t_i)e^{z_i}g_N(z_i|\t_i)}{\wt g_\IS(\t_i,z_i)}=\frac{p(\t_i)\wh p_N(y|\t_i)}{g_\IS(\t_i)}=\wt w(\t_i).
\eeq
This formally justifies Algorithm 1.
Theorem \ref{the:IS1} gives some asymptotic properties (in $M$) of $\IS^2$ estimators. Its proof is in the Appendix.

\begin{theorem}\label{the:IS1}   
Suppose that Assumption 1 holds, $\E_\pi(\varphi)$ exists and is finite,
and $\Sup(\pi)\subseteq\Sup(g_\text{IS})$, where $\Sup(\pi)$ denotes the support of the distribution $\pi$.
\begin{itemize}
\item[(i)] For any $N\geq 1$,  $\wh\varphi_{\IS^2}\stackrel{a.s.}{\longrightarrow}\E_\pi(\varphi)$ as $M\to\infty$.
\item[(ii)] If
\begin{align}\label{eq: finite var}
\int h(\theta)^2 \left (  \frac{\pi(\theta) } { g_{\IS}(\theta) } \right ) ^2  \Big ( \int  \exp(2z)g_N(z|\theta)dz \Big )  g_{\IS}(\theta)  d\theta
\end{align}
is finite for $h(\theta) =  \varphi(\theta) $ and $h(\theta) = 1 $ for all $N$, then
\beqn
\sqrt{M}\Big(\wh\varphi_{\IS^2}-\E_\pi(\varphi)\Big)\stackrel{d}{\to}\N\big(0,\sigma^2_{\IS^2}(\varphi)\big),\;\;M\to\infty,
\eeqn
where the asymptotic variance in $M$ for fixed $N$ is given by
\beq\label{e:ISvar_est_llh}
\sigma^2_{\IS^2}(\varphi)=\E_\pi\left\{\big(\varphi(\t)-\E_\pi(\varphi)\big)^2\frac{\pi(\t)}{g_\IS(\t)}\E_{g_N}[\exp(2z)]\right\}.
\eeq
\item[(iii)] Define
\beq\label{e:ISvar_est1}
\wh{\sigma^2_{\IS^2}(\varphi)}:=\frac{M\sum_{i=1}^{M}\big(\varphi(\t_i)-\wh\varphi_{\IS^2}\big)^2\wt w(\t_i)^2}{\left(\sum_{i=1}^{M}\wt w(\t_i)\right)^2}.
\eeq
If  the conditions in (ii) hold, then $\wh{\sigma^2_{\IS^2}(\varphi)}\stackrel{a.s.}{\longrightarrow}\sigma^2_{\IS^2}(\varphi)$ as $M\to\infty$, for given $N$.
\end{itemize}
\end{theorem}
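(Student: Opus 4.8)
The strategy is to recognize that the $\IS^2$ estimator is, by the derivation preceding the theorem, an ordinary self-normalized importance sampling estimator on the extended space $\wt\Theta = \Theta\otimes\mathbb{R}$ with importance density $\wt g_\IS(\t,z)=g_\IS(\t)g_N(z|\t)$ and weights $\wt w(\t_i,z_i)$ from \eqref{eq:weight_IS2}. Therefore each of the three parts follows by invoking the classical asymptotics of self-normalized IS (strong law, CLT, consistency of the variance estimator) applied on $\wt\Theta$, and then simplifying the resulting expressions back in terms of quantities on $\Theta$.

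For part (i), I would write both the numerator $\frac1M\sum\varphi(\t_i)\wt w(\t_i)$ and the denominator $\frac1M\sum\wt w(\t_i)$ as sample averages of i.i.d.\ functions of $(\t_i,z_i)$. Applying Kolmogorov's strong law, the denominator converges a.s.\ to $\E_{\wt g_\IS}[\wt w(\t,z)] = \int_{\wt\Theta} p(\t)p(y|\t)e^z g_N(z|\t)\,d\t\,dz = p(y)$ by Assumption 1 (the $\E(e^z)=1$ identity), and the numerator converges a.s.\ to $\int_{\wt\Theta}\varphi(\t)p(\t)p(y|\t)e^z g_N(z|\t)\,d\t\,dz = p(y)\E_\pi(\varphi)$ using \eqref{e:key-relation}; both are finite because $\E_\pi(\varphi)$ exists and the support condition $\Sup(\pi)\subseteq\Sup(g_\IS)$ guarantees the weights are well defined $\pi$-a.e. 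The ratio then converges a.s.\ to $\E_\pi(\varphi)$ by the continuous mapping theorem. The only subtlety is checking that the first absolute moments are finite without any second-moment assumption; this works because $\varphi$ is $\pi$-integrable and $\E(e^z)=1$ lets us integrate out $z$ cleanly.

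For part (ii), I would apply the standard delta-method CLT for the ratio of two sample means: writing $\wh\varphi_{\IS^2}-\E_\pi(\varphi)$ as a smooth function of $\big(\frac1M\sum(\varphi(\t_i)-\E_\pi(\varphi))\wt w(\t_i),\ \frac1M\sum\wt w(\t_i)\big)$, a bivariate CLT plus Slutsky gives $\sqrt{M}(\wh\varphi_{\IS^2}-\E_\pi(\varphi))\stackrel{d}{\to}\N(0,\sigma^2_{\IS^2}(\varphi))$ with $\sigma^2_{\IS^2}(\varphi)=\frac{1}{p(y)^2}\E_{\wt g_\IS}\big[(\varphi(\t)-\E_\pi(\varphi))^2\wt w(\t,z)^2\big]$. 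The finiteness hypothesis \eqref{eq: finite var} with $h=\varphi$ (and the accompanying $h=1$ case, which controls the denominator's variance) is exactly the condition needed for this second moment to be finite: expanding $\wt w(\t,z)^2 = \big(\tfrac{p(\t)p(y|\t)}{g_\IS(\t)}\big)^2 e^{2z}$ and integrating $z$ out against $g_N(z|\t)$ produces $\int e^{2z}g_N(z|\t)\,dz$, and rewriting $\big(\tfrac{p(\t)p(y|\t)}{g_\IS(\t)}\big)^2 g_\IS(\t)=p(y)^2\big(\tfrac{\pi(\t)}{g_\IS(\t)}\big)^2 g_\IS(\t)$ recovers \eqref{eq: finite var}. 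Finally I would simplify: the cross term vanishes because $\E_{\wt g_\IS}[(\varphi(\t)-\E_\pi(\varphi))\wt w(\t,z)]=p(y)\E_\pi(\varphi-\E_\pi(\varphi))=0$, so only the $h=\varphi$ contribution survives, and after dividing by $p(y)^2$ and collapsing $\int e^{2z}g_N(z|\t)dz=\E_{g_N}[\exp(2z)]$ one lands on exactly \eqref{e:ISvar_est_llh}.

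For part (iii), I would show $\wh{\sigma^2_{\IS^2}(\varphi)}$ in \eqref{e:ISvar_est1} is a ratio of sample averages that converges a.s.\ to $\sigma^2_{\IS^2}(\varphi)$. The denominator $\big(\frac1M\sum\wt w(\t_i)\big)^2\to p(y)^2$ by part (i). For the numerator $\frac1M\sum(\varphi(\t_i)-\wh\varphi_{\IS^2})^2\wt w(\t_i)^2$, I would add and subtract $\E_\pi(\varphi)$ inside the square, expand, and use that $\wh\varphi_{\IS^2}\to\E_\pi(\varphi)$ a.s.\ together with the strong law applied to $\frac1M\sum(\varphi(\t_i)-\E_\pi(\varphi))^2\wt w(\t_i)^2$ and to $\frac1M\sum(\varphi(\t_i)-\E_\pi(\varphi))\wt w(\t_i)^2$; the cross and remainder terms are $o(1)$ a.s. The finiteness of \eqref{eq: finite var} with $h=\varphi$ is precisely what makes $\E_{\wt g_\IS}[(\varphi(\t)-\E_\pi(\varphi))^2\wt w(\t,z)^2]<\infty$, so the strong law applies. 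Assembling the pieces gives the claimed a.s.\ limit.

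\textbf{Main obstacle.} The routine parts are the delta-method bookkeeping; the step that requires care is verifying that the moment condition \eqref{eq: finite var}, stated on $\Theta$, is equivalent to finiteness of the relevant second moments of the extended-space weights, i.e.\ correctly tracking the $z$-integration and the Jacobian-free change between $g_\IS(\t)d\t$ and $\pi(\t)d\t$ so that the constant $p(y)^2$ appears and cancels. A secondary point is being careful in part (i) that only first moments — not the second moments of \eqref{eq: finite var} — are needed, so the strong law in (i) holds for every $N\ge 1$ with no extra hypotheses, exactly as the statement claims.
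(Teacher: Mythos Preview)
Your proposal is correct and follows essentially the same route as the paper: recognize $\wh\varphi_{\IS^2}$ as a self-normalized IS estimator on the extended space $\wt\Theta$, then apply the strong law to numerator and denominator for (i), a CLT combined with Slutsky for (ii), and the strong law again for (iii). The only cosmetic difference is that for (ii) the paper writes $\wh\varphi_{\IS^2}-\E_\pi(\varphi)=S_M/Y_M$ with $S_M=\frac1M\sum(\varphi(\t_i)-\E_\pi(\varphi))\wt w(\t_i,z_i)$ mean-zero, and applies a \emph{univariate} CLT to $\sqrt{M}S_M$ together with the SLLN for $Y_M$ and Slutsky, rather than invoking a bivariate CLT plus delta method; your framing is slightly heavier machinery but arrives at the identical variance expression. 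Your treatment of (iii), expanding around $\E_\pi(\varphi)$ and controlling cross terms, is in fact more careful than the paper's terse argument, which simply asserts the a.s.\ convergence of the ratio.
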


We note that both $\sigma^2_{\IS^2}(\varphi)$ and $\wh{\sigma^2_{\IS^2}(\varphi)}$ depend on $N$,
but, for simplicity, we do not show this  dependence explicitly in the notation.
Here, all the probabilistic statements, such as the almost sure convergence,
must be understood on the extended probability space that takes into account the extra randomness occurring when estimating the likelihood.
The result (iii) is practically useful,
because \eqref{e:ISvar_est1} allows us to estimate the standard error of the IS$^2$ estimator.

\begin{remark} \label{remark: IS scheme} The standard IS scheme using the likelihood is  
a special case of Algorithm 1, where
$\wh p_N(y|\t)=p(y|\t)$,
the weights $\wt w(\t_i)=w(\t_i)$ and therefore $\wh\varphi_{\IS^2}=\wh\varphi_\IS$.
In this case  $g_N(z|\t)=\delta_0(z)$, the delta Dirac distribution concentrated at zero.
From Theorem \ref{the:IS1},  $\wh\varphi_\IS\stackrel{a.s.}{\longrightarrow}\E_\pi(\varphi)$,
and
\beq\label{e:ISvar_exact}
\sqrt{M}\Big(\wh\varphi_\IS-\E_\pi(\varphi)\Big)\stackrel{d}{\to}\N(0,\sigma^2_\IS(\varphi)),
\eeq
with
\beq\label{e:ISasym_var}
\sigma^2_\IS(\varphi)=\E_\pi\left\{\big(\varphi(\t)-\E_\pi(\varphi)\big)^2\frac{\pi(\t)}{g_\IS(\t)}\right\},
\eeq
which is  estimated by
\beq\label{e:ISvar_est}
\wh{\sigma^2_\IS(\varphi)}=\frac{M\sum_{i=1}^{M}\big(\varphi(\t_i)-\wh\varphi_\IS\big)^2w(\t_i)^2}{\left(\sum_{i=1}^{M}w(\t_i)\right)^2}.
\eeq
These convergence results are well known in the literature; see, e.g. \cite{Geweke:1989}.
\end{remark}

\subsection{$\IS^2$ for estimating the marginal likelihood}\label{sec:marg llh}
The marginal likelihood
$p(y) = \int_\Theta p(\t)p(y|\t)d\t$
is a fundamental tool for Bayesian model choice \citep[see, e.g.,][]{Kass:1995}.
Estimating  the marginal likelihood and its associated standard error accurately has proved difficult with standard MCMC techniques \citep{Chib:2001,Perrakis:2014}.
This problem is even more severe when the likelihood is intractable.
This section and section \ref{Subsec:tradeoffIS} discuss how to estimate $p(y)$ and the associated standard error
optimally, reliably and unbiasedly by $\IS^2$ when the likelihood is intractable.

The $\IS^2$ estimator of $p(y)$ is
\beq\label{eq:llh IS2}
\wh p_{\IS^2}(y) = \frac{1}{M}\sum_{i=1}^M \wt w(\t_i),
\eeq
with the samples $\t_i$ and the weights obtained from Algorithm 1.
The estimator of the variance of $\wh p_{\IS^2}(y)$ is
\beq\label{eq:var llh IS2}
\wh{\Var}(\wh p_{\IS^2}(y)) = \frac{1}{M}\sum_{i=1}^M \big(\wt w(\t_i)-\wh p_{\IS^2}(y)\big)^2.
\eeq
The following theorem shows some properties of these $\IS^2$ estimators.
Its proof is in the Appendix.
\begin{theorem}\label{lem: robust} Let $M$ be the number of particles for $\theta$ and $N$ be the number of particles for estimating the likelihood. Under the assumptions in Theorem \ref{the:IS1}
\begin{itemize}
  \item[(i)] $\E[\wh p_{\IS^2}(y)]=p(y)$ and $\wh p_{\IS^2}(y)\stackrel{a.s.}{\longrightarrow}p(y)$ as $M\to\infty$ for any $N\geq 1$.
  \item[(ii)] There exists a finite constant $K$ such that $\Var(\wh p_{\IS^2}(y))\leq K/M$.
  \item[(iii)] $\sqrt{M}(\wh p_{\IS^2}(y)-p(y))\stackrel{d}{\to}\N\big(0,\Var(\wh p_{\IS^2}(y))$
and $\wh{\Var}(\wh p_{\IS^2}(y))\stackrel{a.s.}{\longrightarrow}\Var(\wh p_{\IS^2}(y))$ as $M\to\infty$ for any $N\geq 1$.
\end{itemize}
\end{theorem}

\subsection{The effect on importance sampling of estimating the likelihood}\label{Subsec:influenceIS}
The results in the previous section show that
it is straightforward to use importance sampling
even when the likelihood is intractable but unbiasedly estimated.
This section addresses the  question of how much asymptotic efficiency is lost
when working with an estimated likelihood.
We follow \cite{Pitt:2012} and \cite{Doucet:2014} and
make the following idealized assumption to make it possible to develop some theory.
Proposition~\ref{prop:large sample z} of section~\ref{SS app: large sample properties}  of the supplementary material justifies this assumption for panel data.
\begin{assumption} \label{ass: normal}  
\begin{enumerate}
\item
There exists a function $\gamma^2(\t)$ such that the density $g_N(z|\t)$ of $z$ is $\N(-\frac{\gamma^2(\t)}{2N},\frac{\gamma^2(\t)}{N})$, where $\N(a,b^2)$ is a univariate normal
density with mean $a$ and variance $b^2$.
\item
 For a given $\sigma^2>0$, define $N_{\s^2}(\t):=\gamma^2(\t)/\s^2$. Then, $\Var(z|\theta, N =N_{\s^2}(\t) )\equiv\s^2$ for all $\theta \in \Theta$.
\end{enumerate}
\end{assumption}
If $g_N(z|\theta)$ is Gaussian, then, by Assumption \ref{ass: exp}, its mean must be $-\frac12 $ times its variance in order that $\E_{g_N}(\exp(z))=1$.
Assumption~\ref{ass: normal}~(ii) keeps the variance $\Var(z|\theta, N)$ constant across different values of $\t$, thus
making it easy to associate the $\IS^2$ asymptotic variances with $\s$.
Under Assumption~\ref{ass: normal}, the density $g_N(z|\t)$ depends only on $\s$
and we write it as $g(z|\s)$.
\begin{lemma}\label{lem:conditions}  
If Assumption 2 holds for a fixed $\s^2$, then  \eqref{eq: finite var} becomes
\beq\label{eq: stand IS conds}
\int h(\theta)^2 \left(\frac{\pi(\t)}{g_\IS(\t)}\right)^2 g_\IS(\t)d\t<\infty  .
\eeq
for both $h = \varphi$ and $h = 1$.
\end{lemma}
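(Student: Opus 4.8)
The plan is to evaluate the inner integral $\int \exp(2z)\,g_N(z\mid\t)\,dz$ appearing in \eqref{eq: finite var} explicitly under Assumption 3 and to observe that it collapses to a finite positive constant that does not depend on $\t$; this constant then factors out of the outer integral and leaves exactly the left-hand side of \eqref{eq: stand IS conds}.

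First I would combine the two parts of Assumption 3. By Assumption 3(i) the density of $z$ is $\N(-\gamma^2(\t)/(2N),\gamma^2(\t)/N)$, and by Assumption 3(ii) we take $N = N_{\s^2}(\t) = \gamma^2(\t)/\s^2$ for the fixed target value $\s^2$. Substituting this choice of $N$, the mean becomes $-\s^2/2$ and the variance becomes $\s^2$, so $z\sim\N(-\s^2/2,\s^2)$; in particular the density no longer depends on $\t$ and coincides with the density written $g(z\mid\s)$ in the text preceding the lemma.

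Next I would use the moment generating function of the normal law: for $z\sim\N(m,s^2)$ one has $\E[e^{tz}] = \exp\!\big(tm + t^2 s^2/2\big)$. Taking $t=2$, $m=-\s^2/2$ and $s^2=\s^2$ gives $\int \exp(2z)\,g(z\mid\s)\,dz = \exp(-\s^2 + 2\s^2) = \exp(\s^2)$. Inserting this value into \eqref{eq: finite var}, that expression equals $\exp(\s^2)\int h(\t)^2\big(\pi(\t)/g_\IS(\t)\big)^2 g_\IS(\t)\,d\t$, so since $0<\exp(\s^2)<\infty$ its finiteness for $h=\varphi$ and for $h=1$ is equivalent to the finiteness of \eqref{eq: stand IS conds} for $h=\varphi$ and $h=1$, which is the assertion of the lemma.

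There is essentially no hard step here. The only points that require a little care are that the reduction of $g_N(z\mid\t)$ to a $\t$-free Gaussian uses both clauses of Assumption 3 simultaneously --- the mean/variance pairing of 3(i) together with the calibration $N=\gamma^2(\t)/\s^2$ of 3(ii) --- and that it is the strict positivity and finiteness of the constant $\exp(\s^2)$ that upgrades the argument from a one-sided implication to the stated equivalence.
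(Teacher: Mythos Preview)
Your proposal is correct and matches the paper's approach: the paper omits the proof as ``straightforward,'' but the identical computation $\E_{g_N(z|\t)}[\exp(2z)]=\exp(\s^2)$ under Assumption 3 appears explicitly in its proof of Theorem~\ref{the:ISefficency}, so your argument is exactly the intended one.
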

These are the standard conditions for IS \citep{Geweke:1989}.
The proof of this lemma is straightforward and omitted.

Recall that $\sigma^2_\IS(\varphi)/M$ and $\sigma^2_{\IS^2}(\varphi)/M$ are
respectively the asymptotic variances of the IS estimators
we would obtain when the likelihood is available
and when the likelihood is estimated.
We refer to the ratio $\sigma^2_{\IS^2}(\varphi)/\sigma^2_\IS(\varphi)$ as the inflation factor.
Theorem~\ref{the:ISefficency} obtains an expression for the inflation factor, shows that it is independent of $\varphi$, greater than or equal to 1 and increases exponentially with $\sigma^2$. Its proof is in the Appendix.

\begin{theorem}\label{the:ISefficency}  
Under Assumption~\ref{ass: normal} and the conditions in Theorem \ref{the:IS1},
\beq\label{e:IF_IS}
\frac{\sigma^2_{\IS^2}(\varphi)}{\sigma^2_\IS(\varphi)}=\exp(\sigma^2).
\eeq
\end{theorem}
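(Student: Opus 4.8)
The plan is to compare directly the two asymptotic-variance formulas already in hand: \eqref{e:ISvar_est_llh} for $\sigma^2_{\IS^2}(\varphi)$ and \eqref{e:ISasym_var} for $\sigma^2_\IS(\varphi)$. The only difference between the two integrands is the extra factor $\E_{g_N}[\exp(2z)]$ sitting inside the expectation over $\pi$, so the whole proof reduces to showing that under Assumptions 1--3 this factor equals the $\t$-free constant $\exp(\sigma^2)$; once that is established it pulls straight out of the $\pi$-expectation and the ratio is immediate.

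First I would check that the hypotheses of Theorem \ref{the:IS1}(ii) are actually in force, so that formula \eqref{e:ISvar_est_llh} may be used: Lemma \ref{lem:conditions} shows that under Assumption 3 the requirement \eqref{eq: finite var} collapses to the standard IS condition \eqref{eq: stand IS conds}, which is assumed for $h=\varphi$ and $h=1$. Next I would use Assumption 3 to pin down the law of $z$. Part (i) gives $z\mid\t\sim\N(-\gamma^2(\t)/(2N),\gamma^2(\t)/N)$, and part (ii) fixes $N=N_{\sigma^2}(\t)=\gamma^2(\t)/\sigma^2$; substituting, the variance of $z$ becomes $\sigma^2$ and its mean becomes $-\sigma^2/2$, so $z\sim\N(-\sigma^2/2,\sigma^2)$ with a law that does not depend on $\t$ (this is exactly the density written $g(z|\sigma)$ after Assumption 3). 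Then I would evaluate the factor via the Gaussian moment generating function: for $z\sim\N(\mu,\tau^2)$ one has $\E[\exp(2z)]=\exp(2\mu+2\tau^2)$, so with $\mu=-\sigma^2/2$ and $\tau^2=\sigma^2$ this is $\exp(-\sigma^2+2\sigma^2)=\exp(\sigma^2)$. Since $\E_{g_N}[\exp(2z)]=\exp(\sigma^2)$ is constant in $\t$, it factors out of \eqref{e:ISvar_est_llh}, giving
\[
\sigma^2_{\IS^2}(\varphi)=\exp(\sigma^2)\,\E_\pi\!\left\{\big(\varphi(\t)-\E_\pi(\varphi)\big)^2\frac{\pi(\t)}{g_\IS(\t)}\right\}=\exp(\sigma^2)\,\sigma^2_\IS(\varphi),
\]
and dividing yields \eqref{e:IF_IS}.

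There is no genuinely hard step here; the only point deserving care is that Assumption 3(ii) makes $\gamma^2(\t)$ cancel simultaneously from both the mean and the variance of $z$, which is precisely what renders $\E_{g_N}[\exp(2z)]$ independent of $\t$ and lets it leave the $\pi$-expectation. The particular mean $-\sigma^2/2$ is essential: it comes from the second-order delta method in Lemma \ref{l:CLT} and is exactly the correction keeping $\E(e^z)=1$ consistent with the unbiasedness in Assumption 1, so the clean inflation-factor identity \eqref{e:IF_IS} is in effect a consequence of the log-normal approximation to the likelihood estimator together with its unbiasedness.
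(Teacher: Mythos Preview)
Your proposal is correct and follows essentially the same approach as the paper: use Assumption~3 to identify $g_N(z|\t)$ with $\N(-\sigma^2/2,\sigma^2)$, compute $\E_{g_N}[\exp(2z)]=\exp(\sigma^2)$ via the Gaussian moment generating function, and then pull this $\t$-free constant out of \eqref{e:ISvar_est_llh} to obtain $\sigma^2_{\IS^2}(\varphi)=\exp(\sigma^2)\,\sigma^2_\IS(\varphi)$. Your write-up is in fact more detailed than the paper's three-line proof, spelling out the cancellation of $\gamma^2(\t)$ in both moments of $z$ and invoking Lemma~\ref{lem:conditions} to justify that the variance formula \eqref{e:ISvar_est_llh} applies.
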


\subsection{Optimally choosing the number of particles $N$}\label{Subsec:tradeoffIS}
In this section we wish to explicitly take account of both the statistical precision
of the estimator for the IS$^2$ method, given by \eqref{eq:Var_varphi} below, and
the computational effort. It is apparent that there will be a trade-off
between the two considerations. A large value of $N$ will result in a precise
estimator so that $\sigma^2$ will be small and the relative variance $\Var(\wh\varphi_{\IS^2})/\Var(\wh\varphi_{\IS})$ given by
\eqref{e:IF_IS} will be close to 1. However, the cost of such an estimator will be large due
to the large number of particles, $N$, required. Conversely, a small value of $N$ will
result in an estimator that is cheap to evaluate but results in a large value
of $\sigma^2$ and hence the variance of the IS$^2$ estimator relative to the IS estimator will be large. To explicitly
trade-off these considerations, we introduce the computational time measure $CT(\sigma^2)$
 which is a product of the relative variance and the computational effort.
Minimising  $CT(\sigma^2)$ results in an optimal value for
$\sigma^2$ and hence $N$.

Under Assumption~\ref{ass: normal}, $N=N_{\sigma^{2}}(\theta)=\gamma^{2}(\theta
)/\sigma^{2}$, so that the expected number of particles, over the draws of $\theta$,
is  ${\ov N}=\overline{{\gamma}^{2}}/\sigma^{2}$, where
$\overline {{\gamma}^{2}}:=\mathbb{E}_{g_{IS}}[\gamma^{2}(\theta)]$. We shall assume that the computational effort
required to compute $\widehat{p}_{N}(y|\theta)$ is $\tau_{0}+N\tau_{1}$,
where $\tau_0$ is the overhead cost of setting up the importance
sampler, and  $\tau_{1}$ is the cost of drawing each particle.  It is reasonable to assume that both $\tau_0$
and $\tau_1$ are independent of $\theta$. Hence, the expected
computational effort  for each $\theta$ is $\tau_{0}+\tau_{1}\overline{{\gamma}^{2}}/\sigma^{2}$.

From Theorems \ref{the:IS1} and \ref{the:ISefficency},
the variance of the estimator $\wh\varphi_{\IS^2}$ based on $M$
importance samples from $g_\IS(\t)$
is approximated by
\beq\label{eq:Var_varphi}
\Var(\wh\varphi_{\IS^2})\approx\frac{\sigma^2_{\IS^2}(\varphi)}{M} = \frac{\sigma^2_{\IS}(\varphi)}{M}\exp(\sigma^2)\approx \Var(\wh\varphi_{\IS})\exp(\sigma^2)
\eeq

We now define the measure of the computing time of the IS$^2$ method {\em relative} to
that of the hypothetical IS method, to achieve the same precision,
as
\begin{align}\label{eq: CT def}
CT(\sigma^{2}):=\exp(\sigma^{2})\left(  \tau_{0}+\tau_{1}\frac{\overline
{\gamma^{2}}}{\sigma^{2}}\right) ,
\end{align}
which is the product of the relative variance from \eqref{eq:Var_varphi} and the expected
computational effort.
It is straightforward to check that $\CT(\s^2)$ is convex and minimized at
\beq\label{eq:sigma_opt}
\s^2_\text{opt}=\frac{-\tau_1+\sqrt{\tau_1^2+4\tau_0\tau_1/\overline{\gamma^2}}}{2\tau_0/\overline{\gamma^2}}\;\text{if}\;\tau_0\not=0\;\text{and}\;
\s^2_\text{opt}=1\;\text{if}\;\tau_0=0.
\eeq
The optimal number of particles $N$ is such that
$\Var(z|\theta, N = N_{\sigma^2}(\theta))=\Var(\log\;\wh p_{N}(y|\theta,N = N_{\sigma^2}(\theta ))=\s^2_\text{opt}, $
for $\sigma^2 = \s^2_\text{opt}$.

Section~\ref{S: practical guide selecting N} discusses  practical guidelines for selecting an optimal number of particles.

\subsection{Optimal $N$ for estimating the marginal likelihood}\label{SS: optimal N marg likel}
The standard IS estimator of the marginal likelihood $p(y)$ when the likelihood is known is
$\wh p_\IS(y) = \sum_{i=1}^M w(\t_i)/M$,
with the weights $w(\t_i)$ from \eqref{e:standardISestimator}.
The corresponding marginal likelihood estimator when the likelihood is estimated
is $\wh p_{\IS^2}(y) = \sum_{i=1}^M {\wt w}(\t_i)/M$, with the weights
${\wt w}(\t_i)$ from \eqref{eq:weight_IS2}.
We have that $
\mathbb{E}_{g_{IS}}[\omega(\theta)]=\mathbb{E}_{\widetilde{g}_{IS}%
}[\widetilde{\omega}(\theta)]=p(y),
$
where again $\widetilde{\omega}(\theta)=e^{z}\omega(\theta)$. We again wish to
compare the ratios of the $IS$ estimator for the marginal likelihood with the
corresponding $IS^{2}$ estimator. Under Assumption \ref{ass: normal} that $z$ is Gaussian and
independent of $\theta$,
\begin{align*}
\mathbb{V}_{\widetilde{g}_{IS}}[\widetilde{\omega}(\theta)/p(y)]  &
=\mathbb{E}_{\widetilde{g}_{IS}}[e^{2z}\omega^{2}(\theta)/p(y)^{2}%
]-1=e^{\sigma^{2}}\mathbb{E}_{_{g_{IS}}}[\omega^{2}(\theta)/p(y)^{2}]-1\\
&  =e^{\sigma^{2}}\left(  \mathbb{V}_{_{g_{IS}}}[\omega(\theta
)/p(y)]+1\right)  -1.
\end{align*}
Consequently, the relative variance of the two schemes is given by%
\begin{align} \label{eq: ratio var marg lik}
\frac{\mathbb{V}_{\widetilde{g}_{IS}}[\widehat{p}_{IS^{2}}(y)]}{\mathbb{V}%
_{_{g_{IS}}}[\widehat{p}_{IS}(y)]} & =\frac{\mathbb{V}_{\widetilde{g}_{IS}%
}[\widetilde{\omega}(\theta)/p(y)]}{\mathbb{V}_{_{g_{IS}}}[\omega%
(\theta)/p(y)]}=\frac{e^{\sigma^{2}}(v+1)-1}{v},
\end{align}
where $v=\mathbb{V}_{_{g_{IS}}}[\omega(\theta)/p(y)]$ for notational
simplicity. For $v$ large, \eqref{eq: ratio var marg lik} shows that the resulting ratio is \eqref{eq:Var_varphi}.
Following the argument of section~\ref{Subsec:tradeoffIS}, the corresponding
computing time measure may be introduced as
\[
CT_{ML}(\sigma^{2}):=\left(  \tau_{0}+\tau_{1}\frac{\overline{\gamma}^{2}%
}{\sigma^{2}}\right)  \left\{  \frac{e^{\sigma^{2}}(v+1)-1}{v}\right\}  ,
\]
where the subscript $ML$ indicates this is for the marginal likelihood estimator.

Let $\s^2_{\min}(v)$ minimize $CT_{ML}(\sigma^{2})$ for a given $v$.
The following proposition summarizes some properties of $\s^2_{\min}(v)$, where
$\s^2_\text{opt}$ below is given by \eqref{eq:sigma_opt}.
Its proof is in section~\ref{app supp: proof} of the supplementary material.
\begin{proposition}\label{proposition:marg_lik}  
\begin{enumerate}
\item[(i)] For any value of $v$, $\CT_{ML}(\s^2)$ is a convex function of $\sigma^{2}$;
therefore $\s^2_{\min}(v)$ is unique.
\item[(ii)] $\s^2_{\min}(v)$ increases as $v$ increases and
$\s^2_{\min}(v)\longrightarrow\s^2_\text{opt}$ in \eqref{eq:sigma_opt} as $v\longrightarrow\infty$.
\end{enumerate}
\end{proposition}
It can be readily checked that $\s^2_{\min}(v)$ is insensitive to $v$ in the sense that
it is a flat function of $v$ for large $v$, with $\sigma^2_{\min}\to\s^2_\text{opt}=0.17$ as $v $ increases.
See also Table~\ref{tab:marg_lik} in section \ref{SS: optimal N marginal likel} of the supplementary material. Based on these observations,
we advocate using $\sigma^2_\text{opt}$ in \eqref{eq:sigma_opt} as the optimal value
of the variance of the log likelihood estimates in estimating the marginal likelihood.

\subsection{Constructing the importance density $g_{\IS}(\theta)$ }\label{Subsec:proposal}
A potential drawback with $\IS^2$, when estimating the posterior of $\theta$, {\em but not} for estimating the marginal likelihood,
is that its performance depends on
the proposal density $g_\IS(\t)$ for $\theta $, which  may
be difficult to obtain in complex models.
This section outlines the Mixture of $t$ by Importance Sampling Weighted Expectation Maximization (MitISEM) of \cite{hod2012}
approach we used in the article for
designing efficient and reliable proposal densities for $\IS^2$.
Section~\ref{SS_supp:proposal} of the supplementary material outlines a
second approach based on annealed importance sampling for models with latent variables.

MitISEM constructs a mixture of $t$ densities for approximating the target distribution
by minimizing the Kullback–-Leibler divergence between the target and the $t$ mixture,
and can handle target distributions that have non-standard shapes such as multimodality and skewness.
When the likelihood is available, the MitISEM can be used to design proposal densities
that approximate the posterior accurately.
It is natural to use an estimated likelihood when the likelihood is  unavailable.
We write $\wh p_N(y|\t)=\wh p_N(y|\t,u)$, with $u$ a fixed random number stream for all $\theta$.
The target distribution in the MitISEM is  $p(\t)\wh p_N(y|\t,u)/p(y)$, which can be considered
as the posterior $p(\theta|y,u)$ conditional on $y$ and the common random numbers $u$.
Our procedure is analogous to using common random numbers $u$ to obtain simulated maximum likelihood estimates of
$\theta$ \citep[see, e.g.,][]{Gourieroux:1995}, except that we obtain a histogram estimate of the \lq posterior\rq{}
$p (\theta| u, y) \propto p(y|\theta, u ) p(u) $. This \lq posterior\rq{}  is biased, but sufficiently good to obtain a good proposal density.

\section{Practical guidelines for selecting an optimal $N$} \label{S: practical guide selecting N}
Let $\Var_{N,\t}(z) = \Var(z|N, \theta ) $ and let $\wh\Var_{N,\t}(z)$ be an estimate of $\Var_{N,\t}(z)$.
We suggest the following practical guidelines for tuning the optimal number of particles $N$.
Section \ref{sec:IS_theory} of the supplementary material gives another strategy.
Although $N$ generally depends on $\t$, we suppress  this dependence for notational simplicity.

From \eqref{eq:sigma_opt}, if $\tau_0=0$, then it is necessary to tune $N$ such that $\wh\Var_{N,\t}(z)=1$.
A simple strategy is to start with some small $N$ and increase it if $\wh\Var_{N,\t}(z) > 1$.

If $\tau_0>0$, we first need to estimate $\overline{\gamma^2}=\E_{g_\IS}(\gamma^2(\t))$.
Let $\{\t^{(1)},...,\t^{(J)}\}$ be $J$ draws from the importance density $g_\IS(\t)$.
Then, starting with some large $N_0$ , $\overline{\gamma^2}$ can be estimated by
\beq\label{eq:gamma^2}
\wh{\overline{\gamma^2}} := \frac{1}{J}\sum_{j=1}^J\wh{\gamma^2}(\t^{(j)})=\frac{N_0}{J}\sum_{j=1}^J\wh\Var_{N_0,\t^{(j)}}(z),
\eeq
as $\wh\Var_{N_0,\t^{(j)}}(z)=\wh\gamma^2(\t^{(j)})/N_0$.
By substituting $\wh{\overline{\gamma^2}} $ into \eqref{eq:sigma_opt},  we obtain an estimate $\wh{\s}^2_\text{opt}$ of ${\s}^2_\text{opt}$.
Now, for each draw of $\t_i$ in Algorithm 1, we start with some small $N$ and increase $N$ if $\wh\Var_{N,\t_i}(z)>\wh{\s}^2_\text{opt}$.
Note that the new draw $\t_i$ with the estimate $\wh\Var_{N,\t_i}(z)$ can be added to \eqref{eq:gamma^2} to improve the estimate $\wh{\overline{\gamma^2}}$.

In the applications in Section~\ref{Sec:Examples}, we use the \textit{time
normalized variance} (TNV) of an $\text{\textrm{IS}}^{2}$ estimator
$\widehat{\varphi}_{\text{\textrm{IS}}^{2}}$ as a measure of its inefficiency, which we define
as
\begin{align}\label{eq:TNV}%
\mathrm{TNV}(M,\ov N)&:=\mathbb{V}(\widehat{\varphi}_{\text{\textrm{IS}}^{2}}%
)\times (M\tau_0 + \ov N M \tau_1)
\end{align}
where $\overline{N}=(N_{1}+\cdots+N_{M})/M$ and   $(M\tau_0 + \ov N M \tau_1)$ is the total computing time of an
$\text{\textrm{IS}}^{2}$ run with $M$ importance samples $\{\theta
_{i},\,i=1,\dots,M\}$, with $\theta$ generated by $g_{IS}$, and using $N_{i}$
particles to estimate the likelihood at the value $\theta_{i}$. This is
essentially the empirical analogue of the computing time measure
$CT(\sigma^{2})$ of~\eqref{eq: CT def}, where the actual rather than the expected
computational resources are calculated. The correspondence may be seen by
noting from Theorem \ref{the:ISefficency} that $\mathbb{V}(\widehat{\varphi
}_{\text{\textrm{IS}}^{2}})=\mathbb{V}(\widehat{\varphi}_{\text{\textrm{IS}}%
})\exp(\sigma^{2})$ and that, for large $M$, the computational resources
component of~\eqref{eq: CT def},%
\[
\tau_{0}+(\tau_{1}/\sigma^{2})E_{g_{IS}}(\gamma^{2})\approx\tau_{0}+(\tau
_{1}/\sigma^{2})\frac{1}{M}\sum_{i=1}^{M}\gamma^{2}(\theta_{i})\approx\tau
_{0}+\tau_{1}\overline{N}.
\]
The last approximation arises as ${\overline{N}}=\mathbb{E}_{g_{IS}}%
[\gamma^{2}(\theta)]/\sigma^{2}=\overline{\gamma}^{2}/\sigma^{2},$ see Section~\ref{Subsec:tradeoffIS}
Hence,%
\[
\mathrm{TNV}(M,\ov N)\approx M\mathbb{V}(\widehat{\varphi}_{\text{\textrm{IS}}%
})CT(\sigma^{2}),
\]
where $\mathbb{V}(\widehat{\varphi}_{\text{\textrm{IS}}})$ is the variance of
the standard importance sampler and so is independent of $\overline{N}$ and
$\sigma^{2}$.

\section{Applications}\label{Sec:Examples}
\subsection{The mixed logit and generalized multinomial logit models}\label{sec:gmnlmodel}
\subsubsection{Model and data}

The generalized multinomial logit (GMNL) model of \cite{fklw2010} specifies
the probability of individual $i$ choosing alternative $j$ on occasion $t$ as
\begin{equation}\label{eq:choiceprob}
p(\textrm{$i$ chooses $j$ at $t$}|X_{it},\beta_{i})=\frac{\exp(\beta_{0ij}+\sum_{k=1}^{K}\beta_{ki}x_{kijt})}{\sum_{h=1}^{J}\exp(\beta_{0ih}+\sum_{k=1}^{K}\beta_{ki}x_{kiht})},
\end{equation}
where $\beta_{i}=(\beta_{0i1},\ldots,\beta_{0iJ},\beta_{1i},\ldots,\beta_{Ki})'$ and $X_{it}=(x_{1i1t},\ldots,x_{Ki1t},\ldots,x_{1iJt},\ldots,x_{KiJt})'$ are the vectors of utility weights and choice attributes respectively.
The GMNL model specifies the alternative specific constants as
$\beta_{0ij}=\beta_{0j}+\eta_{0ij}$ with $\eta_{0ij}\sim \Nn(0,\sigma_{0j}^2)$
and the attribute weights as
\beqn
\beta_{ki}=\lambda_i\beta_k+\gamma\eta_{ki}+(1-\gamma)\lambda_i\eta_{ki},\qquad \lambda_i=\exp(-\delta^2/2+\delta\zeta_{i}), \qquad k=1,\ldots,K,
\eeqn
with $\eta_{ki}\sim \Nn(0,\sigma_k^2)$ and  $\zeta_{i}\sim \Nn(0,1)$.
The expected value of the scaling coefficients $\lambda_i$ is one, implying that $\Ee(\beta_{ki})=\beta_k$.

When $\delta=0$ (so that $\lambda_i=1$ for all individuals) the GMNL model reduces to the mixed logit (MIXL) model, which we also consider in our analysis.  The MIXL model captures heterogeneity in consumer preferences by allowing individuals to weight the choice attributes differently. By introducing taste heterogeneity, the MIXL specification avoids the restrictive independence of irrelevant alternatives property of the standard multinomial logit model \citep{fklw2010}.
The GMNL model additionally allows for scale heterogeneity through the random variable $\lambda_i$, which changes all attribute weights
simultaneously. Choice behavior in this model can therefore be more random for some consumers than others.  The $\gamma$ parameter weights the specification between two alternative ways of introducing scale heterogeneity into the model.

The parameter vector is $\theta=(\beta_{01},\ldots \beta_{0J},\sigma_{0}^2,\beta_1,\ldots,\beta_K,\sigma_1^2,\ldots,\sigma_K^2,\delta^2,\gamma)'$, while the vector of latent variables for each individual is $x_{i}=(\eta_{0i},\ldots,\eta_{Ki},\lambda_i)$. The likelihood is therefore
\begin{align}\label{eq:gmnllik}
p(y|\theta)=\prod_{i=1}^{I}p(y_i|\theta)=\prod_{i=1}^{I}\left[\int  \(\prod_{t=1}^{T} p(y_{it}|x_{i})\) p(x_i) \dd x_i\right],
\end{align}
where $y_{it}$ is the observed choice, $y=(y_{11},\ldots,y_{1T},\ldots,y_{I1},\ldots,y_{IT})'$ and $p(y_{it}|x_{i})$ is given by the choice probability \eqref{eq:choiceprob}.

 We apply our methods to the Pap smear data set considered  by \cite{fklw2010}, who used simulated maximum likelihood estimation. In this data set, $I=79$ women choose whether or not to have a Pap smear test ($J=2$) on $T=32$ choice scenarios. We let the observed choice for individual $i$ at occasion $t$ be $y_{it}=1$ if the woman chooses to take the test and $y_{it}=0$ otherwise.
 Table \ref{tab:choices} lists the choice attributes and the associated coefficients.  We set $\sigma_5^2=0$ in our
 analysis since we found no evidence of heterogeneity for this attribute beyond the scaling effect.
The model \eqref{eq:choiceprob} is identified
by setting the coefficients as zero when not taking the test.

\begin{table}[h]
\caption{Choice attributes for the pap smear data set}\label{tab:choices}
\begin{center}
\begin{tabular}{lll}
\hline\hline
Choice attributes & Values & Associated parameters \\
\hline
Alternative specific constant for test & 1 & $\beta_{01}, \sigma_{01}$ \\
Whether patient knows doctor &  0 (no), 1 (yes) & $\beta_1, \sigma_1$ \\
Whether doctor is male &  0 (no), 1 (yes) & $\beta_2, \sigma_2$ \\
Whether test is due &  0 (no), 1 (yes) & $\beta_3, \sigma_3$ \\
Whether doctor recommends test &  0 (no), 1 (yes) & $\beta_4, \sigma_4$ \\
Test cost & \{0, 10, 20, 30\}/10 & $\beta_5$ \\
\hline\hline
\end{tabular}
\end{center}
\end{table}

We specify the priors as
$\beta_{01}\sim \Nn(0,100)$, $\sigma_{01} \propto (1+\sigma_{01}^2)^{-1}$,
$\beta_k\sim \Nn(0,100)$, $\sigma_k \propto (1+\sigma_{k}^2)^{-1}$, for $k=1,\ldots,K$,
$\delta\propto (1+\delta/0.2)^{-1}$,
and $\gamma \sim \textrm{U}(0,1)$.
The standard deviation parameters have half-Cauchy priors as suggested by \cite{Gelman2006}.

\subsubsection{Implementation details}\label{sec:isdetails}

We estimate the likelihood \eqref{eq:gmnllik} by integrating out the vector of latent variables for each individual separately using different approaches for the MIXL and GMNL models. For the MIXL model, we combine the efficient importance sampling (EIS) method of \cite{ZR2007} with the defensive sampling approach of \cite{Hesterberg1995}. The importance density is the two component defensive mixture
\[h(x_i|y_{i1},\ldots,y_{iT})=\pi h^\text{EIS}(x_i|y_{i1},\ldots,y_{iT})+(1-\pi)p(x_i),\]
where $h^\text{EIS}(x_i|y_{i1},\ldots,y_{iT})$ is a multivariate Gaussian importance density obtained using EIS. Following
 \cite{Hesterberg1995}, including the natural sampler $p(x_i)$ in the mixture ensures that the importance weights are bounded. We set the mixture weight as $\pi=0.5$. For the GMNL model, we follow \cite{fklw2010} and use the model density $p(x_i)$ as an importance sampler. We implement this simpler approach for the GMNL model because the occurrence of large values of $\lambda_i$ causes the defensive mixture estimates of the log-likelihood to be pronouncedly right skewed in this case.

The panel structure of the problem implies that the log-likelihood estimates are sums of independent estimates $\log\hspace{1mm}  \widehat{p}(y_i|\theta)$ for each individual. In order to target a certain variance $\sigma^2$ for the log-likelihood estimator, we choose the number of particles for each individual ($N_i$) and parameter combination $\theta$ such that $\Var(\log\hspace{1mm} \widehat{p}(y_i|x_i))\approx \sigma^2/I$.  We implement this scheme by using a certain number of initial importance samples and the jackknife method to estimate $\gamma^2_i(\theta)$, the asymptotic variance of $\log \hspace{1mm} \widehat{p}(y_i|x_i)$, and select $N_i(\theta)=\widehat{\gamma^2_i}(\theta)I/\sigma^2$. The preliminary number of particles is $N=20$ for the MIXL model and $N=2,500$ for the GMNL model.

To obtain the parameter proposals $g_\IS(\t)$, we use the MitISEM approach as described in section~\ref{Subsec:proposal},
which approximates the posterior of the two models as two component mixtures of multivariate Student's $t$ distributions.

We implement two standard variance reduction methods at each IS stage: stratified mixture sampling and antithetic sampling. The first consists of sampling from each component at the exact proportion of the mixture weights. For example, when estimating the likelihood for the MIXL model we generate exactly $\pi N$ draws from the efficient importance density $h^\text{EIS}(x_i|y_{i1},\ldots,y_{iT})$ and $(1-\pi) N$  draws from $p(x_i)$. The antithetic sampling method consists of generating pairs of perfectly negatively correlated draws from each mixture component, see, e.g., \cite{Ripley87}.

\subsubsection{Posterior analysis}

Table \ref{tab:posterior} presents selected posterior statistics estimated by the IS$^2$ method under the optimal schemes (targeting $\sigma^2_\text{opt}=0.17$ for the MIXL model and $\sigma^2_\text{opt}=1$ for the GMNL model). We estimated the posterior distribution using $M=50,000$ importance samples for the parameters.  We also estimated the Monte Carlo standard errors by bootstrapping the importance samples. We highlight two results. The MC standard errors are low in all cases, illustrating the high efficiency of the IS$^2$ method for carrying out Bayesian inference for these two models. The estimates of the log of the marginal likelihood allow us to calculate a Bayes factor of approximately 20 for the GMNL model over the MIXL model, so that there is some evidence to suggest the presence of scale heterogeneity for this dataset.

\subsubsection{Comparing IS$^2$, pseudo marginal Metropolis-Hastings and SMC$^2$}\label{SS: comparison PMMH IS2 and pseudo marg}

The IS$^2$, PMMH, and SMC$^2$ methods can tackle the same set of problems. This section compares the performance of IS$^2$, PMMH and SMC$^2$. We evaluate the first two methods  under the same independent proposal for the parameter vector $\theta$ and likelihood estimation method, which leads to a direct comparison. To accurately measure the Monte Carlo efficiency of the three methods for the MIXL model, we run 500 independent replications of the IS$^2$ algorithm using $M=5,000$ importance samples, 500 independent PMMH Markov chains with 5,000 iterations (after a burn-in of 1,000 iterations), and 500 independent replications of the SMC$^2$ method with 5,000 particles. We focus on estimates of the posterior mean. We obtain the initial draw for the PMMH Markov chain by using the discrete IS$^2$ approximation of the posterior. We implement the SMC$^2$ method by sequentially updating the posterior for each individual. We estimate the likelihood contribution from each individual in the same way as for the IS$^2$ for the PMMH method. We initialise the method implementing IS$^2$ for the first patient in the dataset. Due to the large prior variances, we found it necessary to fit a \cite{hod2012} proposal for this step. For the sequential updates, we follow \cite{Chopin:2012} and use independent Metropolis-Hastings rejuvenation steps by fitting a multivariate Gaussian proposal to the current particle system.

 In addition, we consider two improvements to the basic IS$^2$ method: the Pareto smoothing (PS) method of \cite{Vehtari2015} and  Quasi-Monte Carlo (QMC) sampling of the parameters and trimmed means. The motivation for using the
Pareto smoothing method is to stabilize the right tail of the distribution of IS$^2$ weights, increasing the robustness of the method when there is noise in the weights
because the likelihood is estimated. We implement QMC by replacing the standard normal draws used to generate samples from $q(\theta|y)$ by scrambled Sobol sequences converted to quasi-random $N(0,1)$ draws through the inverse CDF method.

Table \ref{tab:comparison1} presents the MC mean squared errors for the posterior mean estimates and the MC standard error estimates, normalised by the performance of the PMMH method for the MIXL model.  We approximate the posterior mean by combining the particles generated in all replications for the standard IS$^2$ method. The results support the argument that the IS$^2$ method leads to improved efficiency due to the use of independent samples and variance reduction methods. The table shows that the standard IS$^{2}$ method leads to approximately 70\% average reductions in the MC MSE compared to the PMMH method and the SMC$^2$ implementation.  We find that Pareto smoothing
dramatically increases the accuracy of the MC standard error estimates, although it does not seem to have a large impact on the MSE of the posterior mean estimates for this model. In results not reported in the table, we found that Pareto smoothing leads to better behaved, near Gaussian,
posterior mean estimates, so that we recommend this approach in practice to ensure robustness. The use of QMC leads to a modest incremental improvement of approximately 15\% in the MSE of the posterior mean estimates. Regarding the computational costs, the IS$^2$ and PMMH methods are equivalent in our setting, except for the additional cost of the burn-in period for the latter. Each replication of IS$^2$ and PMMH required approximately 50 seconds in a machine equipped with an Intel Core i7-4790 CPU with 4 cores. In contrast, each SMC$^2$ replication required approximately six minutes, highlighting the larger computational efficiency of IS$^2$ and PMMH for off-line estimation.

\begin{table}[!htbp]
\begin{center}
\caption{IS$^2$, PMMH and SMC$^2$ comparison for the MIXL model for the Monte Carlo MSE and the variance of the Monte Carlo standard error estimates.
IS$^2$/PS means IS$^2$ with Pareto smoothing and IS$^2$/PS/QMC means IS$^2$/PS with quasi Monte Carlo.}\label{tab:comparison1}
\begin{threeparttable}
{\footnotesize }
\begin{tabular}{lccccccccc}
\hline\hline
& \multicolumn{5}{c}{Monte Carlo mean-square error} &  & \multicolumn{3}{c}{Variance of MC SE estimate}\\
 & IS$^2$ & IS$^2$/PS & IS$^2$/PS/QMC &SMC$^2$ & PMMH & \multicolumn{1}{l}{} & IS$^2$ & IS$^2$/PS & PMMH \\
 \cline{2-6}\cline{8-10}
$\beta_{01}$  & 0.298 & 0.269 & 0.206 & 2.662 & 1.000 &  & 0.191 & 0.003 & 1.000 \\
$\beta_1$  &0.338 & 0.300 & 0.222 & 0.722 & 1.000 &  & 0.165 & 0.002 & 1.000 \\
$\beta_2$ & 0.335 & 0.290 & 0.209 & 0.693 & 1.000 &  & 0.173 & 0.002 & 1.000 \\
$\beta_3$ & 0.314 & 0.292 & 0.243 & 0.821 & 1.000 &  & 0.181 & 0.003 & 1.000 \\
$\beta_4$ & 0.253 & 0.248 & 0.240 & 0.983 & 1.000 &  & 0.133 & 0.002 & 1.000 \\
$\beta_5$ & 0.292 & 0.276 & 0.262 & 0.901 & 1.000 &  & 0.147 & 0.001 & 1.000 \\
$\sigma_0$ &0.320 & 0.276 & 0.215 & 1.168 & 1.000 &  & 0.174 & 0.003 & 1.000 \\
$\sigma_1$ & 0.324 & 0.304 & 0.218 & 0.728 & 1.000 &  & 0.177 & 0.002 & 1.000 \\
$\sigma_2$ & 0.310 & 0.285 & 0.294 & 0.615 & 1.000 &  & 0.193 & 0.002 & 1.000 \\
$\sigma_3$ & 0.299 & 0.276 & 0.252 & 0.599 & 1.000 &  & 0.166 & 0.002 & 1.000 \\
$\sigma_4$ & 0.161 & 0.095 & 0.074 & 0.313 & 1.000 &  & 0.387 & 0.003 & 1.000 \\
\hline\hline
\end{tabular}
\end{threeparttable}
\end{center}
\end{table}

Table \ref{tab:comparison2} extends the analysis to the more challenging GMNL specification. We consider only the IS$^2$ and PMMH methods in this case, as the the SMC$^2$ method was prohibitively costly. The results are similar to the ones in Table \ref{tab:comparison2}: the IS$^2$ method leads to 78\% reductions in MSE on average compared to PMMH. When considering only the variations of the $\IS^2$ method, the results show that for this model the use of Pareto smoothing substantially improves the performance of the method, leading to 57\% lower MSEs compared to the standard implementation. As before, the results also suggest that the estimates of the MC variance within each of the replications are substantially more accurate under IS$^2$, especially with Pareto smoothing.

\begin{table}[!tbp]
\begin{center}
\caption{IS$^2$ and PMMH comparison for the GMNL model. IS$^2$/PS means IS$^2$ with Pareto smoothing and IS$^2$/PS/QMC means IS$^2$/PS with quasi Monte Carlo.}\label{tab:comparison2}
\begin{threeparttable}
\begin{tabular}{lcccccccc}
\hline\hline
& \multicolumn{4}{c}{Monte Carlo mean-square error} &  & \multicolumn{3}{c}{Variance of MC SE estimate}\\
 & IS$^2$ & IS$^2$/PS & IS$^2$/PS/QMC & PMMH & \multicolumn{1}{l}{} & IS$^2$ & IS$^2$/PS & PMMH \\
 \cline{2-5}\cline{7-9}
$\beta_{01}$ & 0.254 & 0.091 & 0.080 & 1.000 &  & 0.196 & 0.006 & 1.000 \\
$\beta_1$ & 0.350 & 0.096 & 0.077 & 1.000 &  & 0.420 & 0.022 & 1.000 \\
$\beta_2$ & 0.165 & 0.065 & 0.062 & 1.000 &  & 0.068 & 0.007 & 1.000 \\
$\beta_3$ & 0.335 & 0.102 & 0.065 & 1.000 &  & 0.234 & 0.014 & 1.000 \\
$\beta_4$ & 0.199 & 0.102 & 0.075 & 1.000 &  & 0.096 & 0.013 & 1.000 \\
$\beta_5$ & 0.324 & 0.211 & 0.152 & 1.000 &  & 0.126 & 0.019 & 1.000 \\
$\sigma_0$ & 0.215 & 0.119 & 0.107 & 1.000 &  & 0.093 & 0.008 & 1.000 \\
$\sigma_1$ & 0.110 & 0.059 & 0.051 & 1.000 &  & 0.058 & 0.014 & 1.000 \\
$\sigma_2$ & 0.209 & 0.088 & 0.062 & 1.000 &  & 0.124 & 0.014 & 1.000 \\
$\sigma_3$ & 0.196 & 0.122 & 0.097 & 1.000 &  & 0.105 & 0.012 & 1.000 \\
$\sigma_4$ & 0.188 & 0.058 & 0.039 & 1.000 &  & 0.125 & 0.005 & 1.000 \\
$\delta$ & 0.110 & 0.028 & 0.018 & 1.000 &  & 0.077 & 0.003 & 1.000 \\
$\gamma$ & 0.239 & 0.094 & 0.049 & 1.000 &  & 0.250 & 0.016 & 1.000\\
\hline\hline
\end{tabular}
\end{threeparttable}
\end{center}
\end{table}

\subsection{Stochastic volatility}\label{sec:sv}
\subsubsection{Model, data and importance sampling}

We consider the univariate two-factor stochastic volatility (SV) model with leverage effects
\begin{align*}
&y_t=\exp([c+x_{1,t}+x_{2,t}]/2)
\eps_t,\quad t=1,\ldots,n, \qquad
 x_{i,t+1}=\phi_1x_{i,t}+\rho_1\sigma_{i,\eta}\eps_t+\sqrt{1-\rho_i^2}\sigma_{i,\eta}\eta_{i,t},\quad i=1,2,
\end{align*}
where the return innovations are i.i.d. and have standardized Student's $t$ distributions with $\nu$ degrees of freedom and $1>\phi_1>\phi_2>-1$ for stationarity and identification. This SV specification incorporates the most important empirical features of a
volatility series, allowing for fat-tailed return innovations, a negative correlation between lagged returns and volatility (leverage effects) and quasi-long memory dynamics through the superposition of AR(1) processes \citep[see, e.g.,][]{bNnS2002}. The two volatility factors have empirical interpretations as long term and short term volatility components.

We use the following priors:~$\nu \sim \textrm{U}(2,100)$, $c\sim \Nn(0,1)$, $\phi_1\sim\textrm{U}(0,1)$,
$\sigma_1^2 \sim \textrm{IG}(2.5, 0.0075)$, $\rho_1 \sim \textrm{U}(-1,0)$, $\phi_2\sim\textrm{U}(0,\phi_1)$,
$\sigma_2^2 \sim \textrm{IG}(2.5,0.045)$, and $\rho_2 \sim \textrm{U}(-1,0)$,
and estimate the two factor SV model using daily returns from the S\&P 500 index obtained from the CRSP database. The sample period runs from January 1990 to December 2012, for a total of $5,797$ observations.
As in the previous example, we obtain the parameter proposal $g_\IS(\theta)$ using the MitISEM method of \cite{hod2012},
where we allow for two mixture components.
To estimate the likelihood, we apply the particle efficient importance sampling (PEIS)
 method of \cite{PEIS}, which is a sequential Monte Carlo procedure for likelihood estimation that uses the EIS algorithm of \cite{ZR2007} to obtain a high-dimensional importance density for the states. We implement PEIS using antithetic variables as described in \cite{PEIS}.

\subsubsection{Optimal number of particles}

To investigate the accuracy of the likelihood estimates for the stochastic volatility model, we replicate the Monte Carlo exercise of section~\ref{sec:likeval1} for this example. Table \ref{tab:lik2} displays the average variance, skewness and kurtosis of the log-likelihood estimates based on $N=10$ and $N=20$ particles, as well as the computing times.  We base the analysis on 1,000 draws from the proposal for $\theta$ and 200 independent likelihood estimates for each parameter value.

The results show that the PEIS method is highly accurate for estimating the likelihood of the SV model, despite the large number of observations. When $N=10$, the inflation factor of \eqref{e:IF_IS} is approximately 1.01, indicating that performing IS with the PEIS likelihood estimate has essentially the same efficiency as if the likelihood was known. On the other hand, the log-likelihood estimates display positive skewness and excess kurtosis and are clearly non-Gaussian, suggesting that the theory of
section~\ref{Subsec:tradeoffIS} only holds approximately for this example.

We approximate the optimal number of particles by assuming that the variance of log-likelihood estimates is constant across different values of $\theta$, as it is not possible to use the jackknife method for estimating the variance of log-likelihood estimators based on particle methods. Based on the average variance of the log-likelihood estimate with $N=20$, we estimate the asymptotic variance of the log-likelihood estimate to be $\widehat{\gamma^2}(\theta)=0.1$ on average. The last row of the table allows us to calculate that the overhead of estimating the likelihood is $\tau_0=1.051$ seconds, while the computational cost of each particle is $\tau_1=0.018\times10^{-1}$ seconds.  Assuming normality of the log-likelihood estimate and that $\gamma^2(\theta)=\bar{\gamma}^2=0.1$ for all $\theta$, the optimal number of particles is $N_{\textrm{opt}}=8$.

Figure \ref{fig:svoptimal} plots the predicted relative time normalized variance $\text{TNV}(M,N)/\text{TNV}(M,N_\text{opt})$
as a function of $N$. The figure suggests that the efficiency of $\IS^2$ for this example is fairly insensitive to the number of particles at the displayed range $N\leq40$, with even the minimal number of particles $N=2$ (including an antithetic draw) leading to a highly efficient procedure.

\begin{table}[!tbp]
\begin{center}
\caption{Stochastic volatility - log-likelihood evaluation.}\label{tab:lik2}
\begin{threeparttable}
\begin{tabular}{lcc}
\hline\hline
 & $N=10$ & $N=20$ \\
 \cline{2-3}
Variance & 0.009 & 0.005 \\
Skewness & 0.485 & 0.394 \\
Kurtosis & 3.796 & 3.640 \\
Time (s) & 1.069 & 1.087 \\
\hline\hline
\end{tabular}
\end{threeparttable}
\end{center}
\end{table}

\begin{figure}
    \begin{center}
        \includegraphics[scale=0.45]{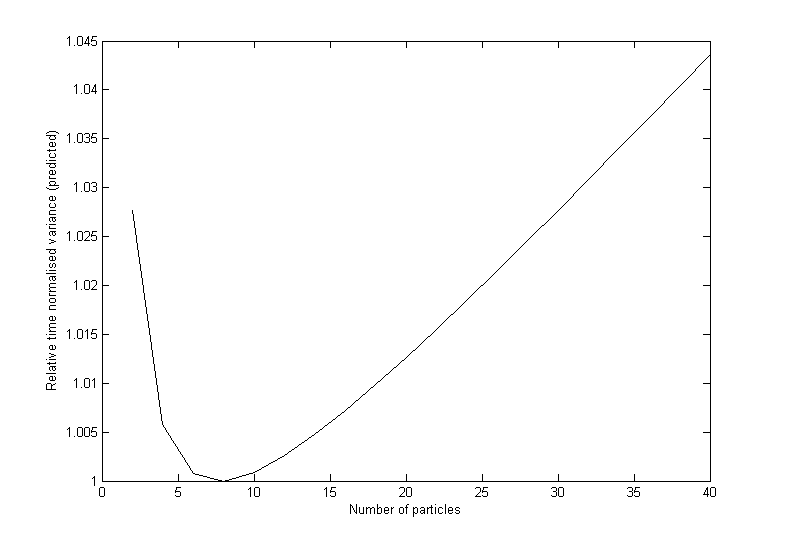}
    \end{center}
    \caption{Predicted relative time normalized variance against the number of particles for the stochastic volatility model.}\label{fig:svoptimal}
\end{figure}

Table \ref{tab:example2} displays estimates of the relative variances  for the posterior means based on $M=50,000$ importance samples for the parameters. We estimate the Monte Carlo variance of the posterior statistics by bootstrapping the importance samples. The results are consistent  with Figure~\ref{fig:svoptimal} and indicate that the efficiency of the IS$^2$ method is insensitive to the number of particles in this example, with no value of $N$ emerging as clearly optimal in the table. Based on this result,
we recommend a conservative number of particles in practice as there is no important efficiency cost in choosing $N$ moderately above the optimal theoretical value.

\begin{table}[!tbp]
\begin{center}
\caption{Stochastic volatility: relative variances for posterior inference.
The table shows the relative variances for IS$^2$ for different numbers of importance samples used for estimating the likelihood.}\label{tab:example2}
\begin{threeparttable}
\begin{tabular}{lcccccc}
\\
\hline\hline
 & \multicolumn{1}{c}{$N=2$} & \multicolumn{1}{c}{$N=4$} & \multicolumn{1}{c}{$N=8$} & \multicolumn{1}{c}{$N=12$} & \multicolumn{1}{c}{$N=16$} & \multicolumn{1}{c}{$N=20$} \\
\cline{2-7}
$\nu$ & 0.963 & 0.873 & 1.000 & 0.937 & 0.893 & 0.885 \\
$c$ & 0.901 & 0.978 & 1.000 & 0.996 & 0.945 & 0.966 \\
$\phi_1$ & 1.007 & 1.019 & 1.000 & 0.991 & 0.980 & 0.996 \\
$\sigma_1^2$ & 0.995 & 1.000 & 1.000 & 1.022 & 0.974 & 1.032 \\
$\rho_1$ & 1.086 & 1.004 & 1.000 & 1.034 & 0.994 & 1.015 \\
$\phi_2$ & 0.952 & 0.947 & 1.000 & 0.999 & 0.991 & 1.007 \\
$\sigma_2^2$ & 1.016 & 1.030 & 1.000 & 1.005 & 0.969 & 0.972 \\
$\rho_2$ & 1.039 & 1.058 & 1.000 & 1.025 & 1.028 & 1.011 \\
\\
Average & 0.995 & 0.989 & 1.000 & 1.001 & 0.972 & 0.986 \\
Theoretical & 1.038 & 1.013 & 1 & 0.996 & 0.994 & 0.993 \\
\\
Time (minutes) & 258 & 260 & 262 & 263 & 264 & 265 \\
Estimated TNV & 0.977 & 0.978 & 1.000 & 1.005 & 0.978 & 0.994 \\
Theoretical & 1.019 & 1.002 & 1.000 & 0.999 & 1.001 & 1.001 \\
\hline\hline
\end{tabular}
\end{threeparttable}
\end{center}
\end{table}

\subsubsection{Posterior analysis}

Table \ref{tab:posterior2} presents estimates of selected posterior distribution statistics estimated by the IS$^2$ method. We estimated the posterior distribution using $M=50,000$ importance samples for the parameters and $N=20$ particles to estimate the likelihood. As before, we estimate the Monte Carlo standard errors by bootstrapping the importance samples. The results show that IS$^2$ leads to highly accurate estimates of the posterior statistics and
 that the short term volatility component is almost entirely driven by leverage effects.

\begin{table}[!tbp]
\begin{center}
\caption{Stochastic Volatility: posterior statistics.
The table presents estimates of selected posterior distribution statistics based on $M=50,000$ importance samples for the parameters. The Monte Carlo standard errors are in brackets. $\widehat{\ESS}_{\IS}$ and $\widehat{\ESS}_{\IS^2}$  are the estimates of the equivalent sample sizse of IS and IS$^2$; see table~\ref{tab:posterior}.  }\label{tab:posterior2}
\begin{threeparttable}
\begin{tabular}{lcccccc}
\\
\hline\hline
& Mean & Std. Dev. & Skew. & Kurt. & \multicolumn{2}{c}{90\% Credible Interval} \\
\cline{2-7}
$\nu$ & $\underset{[0.016]}{13.666}$ & $\underset{[0.024]}{2.619}$ & $\underset{[0.036]}{1.097}$ & $\underset{[0.173]}{5.026}$ & $\underset{[0.019]}{10.203}$ & $\underset{[0.067]}{18.543}$ \\[10pt]
$c$ & $\underset{[0.001]}{0.044}$ & $\underset{[0.001]}{0.178}$ & $\underset{[0.034]}{0.311}$ & $\underset{[0.103]}{3.652}$ & $\underset{[0.002]}{-0.233}$ & $\underset{[0.004]}{0.346}$ \\[10pt]
$\phi_1$ & $\underset{[0.000]}{0.994}$ & $\underset{[0.000]}{0.002}$ & $\underset{[0.015]}{-0.387}$ & $\underset{[0.041]}{3.152}$ & $\underset{[0.000]}{0.992}$ & $\underset{[0.000]}{0.997}$ \\[10pt]
$\sigma_1^2$ & $\underset{[0.000]}{0.007}$ & $\underset{[0.000]}{0.002}$ & $\underset{[0.016]}{0.654}$ & $\underset{[0.061]}{3.653}$ & $\underset{[0.000]}{0.004}$ & $\underset{[0.000]}{0.010}$ \\[10pt]
$\rho_1$ & $\underset{[0.001]}{-0.490}$ & $\underset{[0.001]}{0.106}$ & $\underset{[0.014]}{0.422}$ & $\underset{[0.031]}{2.852}$ & $\underset{[0.001]}{-0.646}$ & $\underset{[0.003]}{-0.295}$ \\[10pt]
$\phi_2$ & $\underset{[0.000]}{0.871}$ & $\underset{[0.000]}{0.037}$ & $\underset{[0.026]}{-0.902}$ & $\underset{[0.124]}{4.205}$ & $\underset{[0.001]}{0.803}$ & $\underset{[0.000]}{0.921}$ \\[10pt]
$\sigma_2^2$ & $\underset{[0.000]}{0.028}$ & $\underset{[0.000]}{0.006}$ & $\underset{[0.019]}{0.516}$ & $\underset{[0.066]}{3.548}$ & $\underset{[0.000]}{0.019}$ & $\underset{[0.000]}{0.038}$ \\[10pt]
$\rho_2$ & $\underset{[0.000]}{-0.950}$ & $\underset{[0.000]}{0.040}$ & $\underset{[0.020]}{1.254}$ & $\underset{[0.127]}{4.806}$ & $\underset{[0.000]}{-0.995}$ & $\underset{[0.001]}{-0.871}$ \\[10pt]
\\
$\log\hspace{1mm}p(y)$&$-7.72\times 10^{-3}$\\[10pt]
$\widehat{\ESS}_{\IS}/\widehat{\ESS}_{\IS^2} $ & 0.99\\
\hline\hline
\end{tabular}
\end{threeparttable}
\end{center}
\end{table}

\section{Conclusions}\label{Sec:conclusion}
This article proposes the IS$^2$ method for Bayesian inference when the likelihood is intractable but can be estimated unbiasedly.
Its advantages are that it gives accurate estimates of posterior moments and the marginal likelihood and estimates of the standard errors of these estimates.
 We note that if the model is estimated by IS$^2$, then the marginal likelihood
estimate and the standard error of the estimator are obtained automatically. However, even if the model is estimated by MCMC, IS$^2$ can be used effectively
to estimate the marginal likelihood and the corresponding standard error.
The article studies the convergence properties of the $\IS^2$ estimators. It
examines the effect of estimating the likelihood
on Bayesian inference
and provides practical guidelines on how to optimally select number of particles to estimate the  likelihood
in order to minimize the computational cost.
The applications illustrate that
the $\IS^2$ method can lead to fast and accurate posterior inference when optimally implemented,
and demonstrate  that the theory and methodology presented in this paper
are useful for practitioners who are working on models with an intractable likelihood.

\section{Supplementary material}  \label{S: supp material}
The paper has an online technical supplement that contains: (a) Some large sample results for panel data about the estimator of the likelihood and
static and dynamic estimators of the optimal variance of the log-likelihood; (b)~proofs of all propositions; (c) Discussion of an alternative importance sampling density
$g_{IS}(\theta)$ and the definition of equivalent sample size (ESS).

\begin{appendix}
\section{Proofs}
\begin{proof}[Proof of Theorem \ref{the:IS1}]
If $\Sup(\pi)\subseteq\Sup(g_\text{IS})$ then $\Sup(\pi_N)\subseteq\Sup(\wt g_\text{IS})$.
This, together with the existence and finiteness of $\E_\pi(\varphi)$ ensures that $\E_{\wt g_\IS}[\varphi(\t)\wt w(\t,z)]=p(y)\E_\pi(\varphi)$ and $\E_{\wt g_\IS}[\wt w(\t,z)]=p(y)$ exist and are finite.
Result (i) follows from the strong law of large numbers.
To prove (ii), write
\beqn
\wh\varphi_{\IS^2}-\E_\pi(\varphi) = \frac{\frac1M\sum_{i=1}^M\big(\varphi(\t_i)-\E_\pi(\varphi)\big)\wt w(\t_i,z_i)} {\frac1{M}\sum_{i=1}^{M}\wt w(\t_i,z_i)}.
\eeqn
Let $X_i=\big(\varphi(\t_i)-\E_\pi(\varphi)\big)\wt w(\t_i,z_i)$, $i=1,...,M$, $S_{M}=\frac{1}{M}\sum_{i=1}^{M} X_i$ and $Y_{M}=\frac1{M}\sum_{i=1}^{M}\wt w(\t_i,z_i)$.
The $X_i$ are independently and identically distributed with $\E_{\wt g_\IS}(X_i)=0$ and
\bean
\nu^2_N=\Var_{\wt g_\IS}(X_i)=\E_{\wt g_\IS}(X_i^2)&=&\int_{\wt\Theta}\big(\varphi(\t)-\E_\pi(\varphi)\big)^2\wt w(\t,z)p(y)\pi_N(\t,z)d\t dz\\
&=&p(y)\E_{\pi_N}\left\{\big(\varphi(\t)-\E_\pi(\varphi)\big)^2\wt w(\t,z)\right\}<\infty.
\eean
By the CLT for a sum of independently and identically distributed random variables
with a finite second moment, $\sqrt{M}S_{M}\stackrel{d}{\to}\N(0,\nu^2_N)$.
By the strong law of large numbers, $Y_{M}\stackrel{P}{\to}\E_{\wt g_\IS}(\wt w(\t,z))=p(y)$.
By Slutsky's theorem,
\beqn
\sqrt{M}\Big(\wh\varphi_{\IS^2}-\E_\pi(\varphi)\Big)=\frac{\sqrt{M}S_{M}}{Y_{M}}\stackrel{d}{\to}\N(0,\nu^2_N/p(y)^2).
\eeqn
The asymptotic variance is given by
\bean
\sigma^2_{\IS^2}(\varphi)=\frac{\nu^2_N}{p(y)^2}&=&\frac{1}{p(y)}\E_{\pi_N}\left\{\big(\varphi(\t)-\E_\pi(\varphi)\big)^2\wt w(\t,z)\right\}\\
&=&\E_\pi\left\{\big(\varphi(\t)-\E_\pi(\varphi)\big)^2\frac{\pi(\t)}{g_\IS(\t)}\E_{g_N(z|\t)}[\exp(2z)]\right\}.
\eean
To prove (iii), write
\bean
\wh{\sigma^2_{\IS^2}(\varphi)}&=&\frac{\frac{1}{M}\sum_{i=1}^{M}\big(\varphi(\t_i)-\wh\varphi_{\IS^2}\big)^2\wt w(\t_i,z_i)^2}{\left(\frac{1}{M}\sum_{i=1}^{M}\wt w(\t_i,z_i)\right)^2}\\
&\stackrel{a.s.}{\to}&\frac{\E_{\wt g_\IS}\left\{\big(\varphi(\t)-\E_\pi(\varphi)\big)^2\wt w(\t,z)^2\right\}}{\Big(\E_{\wt g_\IS}(\wt w(\t,z))\Big)^2}\\
&=&\sigma^2_{\IS^2}(\varphi).
\eean
\end{proof}

\begin{proof}[Proof of Theorem \ref{lem: robust}]
Proof of the unbiasedness of $\wh p_{\IS^2}(y)$ is straightforward.
The assumptions in Theorem \ref{the:IS1} ensure that $\wt w(\t_i)$'s are i.i.d with a finite second moment.
The results of (i) and (iii) then follow.
To prove (ii), we have
\bean
M\Var_{\wt g_\IS}(\wh p_{\IS^2}(y))=\Var_{\wt g_\IS}(\wt w(\t))&\leq&\E_{\wt g_\IS}[\wt w(\t)^2]\\
&=&p(y)^2\int\left(\int e^{2z}g_N(z|\t)dz\right)\left(\frac{\pi(\t)}{g_\IS(\t)}\right)^2g_\IS(\t)d\t<\infty.
\eean
\end{proof}

\begin{proof}[Proof of Theorem \ref{the:ISefficency}]
Under Assumption 2, $g_N(z|\t)=\N(-\s^2/2,\s^2)$ and
$\E_{g_N(z|\t)}[\exp(2z)]=\exp(\s^2)$.
From \eqref{e:ISvar_est_llh} and \eqref{e:ISasym_var},
\beqn
\sigma^2_{\IS^2}(\varphi)=\E_\pi\left\{\big(\varphi(\t)-\E_\pi(\varphi)\big)^2\frac{\pi(\t)}{g_\IS(\t)}\exp(\sigma^2)\right\}=\exp(\sigma^2)\sigma^2_\IS(\varphi).
\eeqn
\end{proof}
\end{appendix}
\renewcommand{\theequation}{S\arabic{equation}}
\renewcommand{\thesection}{S\arabic{section}}
\renewcommand{\theproposition}{S\arabic{proposition}}
\renewcommand{\theassumption}{S\arabic{assumption}}
\renewcommand{\thelemma}{S\arabic{lemma}}
\renewcommand{\thefigure}{S\arabic{figure}}
\renewcommand{\thelemma}{S\arabic{table}}
\setcounter{section}{0}

\section*{Supplementary material for \lq Importance Sampling Squared for Bayesian Inference and Model Choice with Estimated Likelihoods\rq}

\section{Prelminaries}
This supplement contains results that complement those in the main paper. References to the main paper are of the form section~2, equation~(2) and assumption~2, etc,
whereas for the supplementary material we use section~S2, equation (S2) and assumption S2, etc.

\section{Large sample results for panel data}\label{sec:IS_theory}
This section studies the large sample  (in $n$) properties of the
estimator of the likelihood
for  panel data which is described in section~\ref{Sec:latent variable models}, and obtains two results assuming that the number of particles
is $N = O(n)$. First, we justify Assumption \ref{ass: normal} by showing the asymptotic normality
of the error $z$ in the log-likelihood estimator.
Second, we obtain the convergence rates of two estimators of the optimal variance, $\sigma^2_{\rm opt}$,
of the error $z$ in the log likelihood estimator. We show  that the dynamic estimator of $\sigma^2_{\rm opt}$
that chooses the number of particles or samples $N$ depending on $\theta$ is much more efficient for large $n$ than the
static estimator of $\sigma^2_{\rm opt}$ that chooses a single number of particles $N$ for all $\theta$.
Using the notation in section~\ref{Sec:latent variable models}, the error in the log-likelihood estimator is
\begin{align}\label{eq: exp for z}
z  & =\sum_{i=1}^{n} \{\log\;\widehat{p}_{N,i}(y_{i}|\theta)-\log\;p_i(y_{i}|\theta)\}, \text{
where   } {\wh {p}}_{N,i}(y_{i}|\theta)= N^{-1} { \sum_{j=1}^{N}}
\omega_i(\a_{i}^{(j)},\theta),
\end{align}
 $\omega_{i}(\a_{i},\theta)=p(y_{i}|\a_{i},\theta)p(\a_{i}|\theta)/h_i(\a_{i}|y,\theta)$,
and $\a_{i}^{(j)}\stackrel{iid}{\sim}h_i(\a_{i}|y,\theta)$.

All equations, sections etc in this supplement are prefixed by S, e.g. equation (S1). Equations, sections etc that are not so prefixed refer to the main paper.
\subsection{Large sample properties of the estimator of the likelihood}\label{SS app: large sample properties}
Assumption~\ref{ass: ineq} is used to prove Proposition~\ref{prop:large sample z}.
\begin{assumption} \label{ass: ineq}
We assume that for all $\theta$ and $i=1, \dots, n$,
\begin{enumerate}
\item [(i)]
The proposal density $h_i(\a_{i}|y,\theta)$ is sufficiently heavy
tailed so that $\mathbb{E}_{h_i}\left \{  \omega_{i}(\a_{i},\theta
)^{4}\right \}  < K_1 $, where $K_1 > 1 $.
\item [(ii)]   $K_2^{-1} < p(y_i|\theta) \leq K_2 $, where  $K_2 > 1 $, so that $p(y_i|\theta)$ is bounded and also bounded away from zero.
\item [(iii)]  $\mathbb{V}_{h_i}\left \{  \omega_{i}(\a_{i},\theta
)^{2}\right \}  > K_3 > 0  $, so that the importance weights are not all equal, i.e., the proposal is not perfect.
\end{enumerate}
\end{assumption}

Define,
\begin{align}\label{eq: sigma_i}
\sigma_{i}^{2}(\theta) &  :=\mathbb{V}_{h_i} \bigg ( \frac{\omega_i(\alpha_i, \theta) }{ p_i(y_{i}|\theta) }\Big |\theta  \bigg )
= \int \frac{ \omega_{i}(\a_{i},\theta)^{2}}{ p_i(y_{i}|\theta)^{2}} h_i(\a_{i}|y,\theta) d\alpha_i-1.\\
\psi(n,\theta) &  := \frac1n \sum_{i=1}^n \sigma_i^2(\theta) , \quad  \wh { \psi}_N(n,\theta)  := \frac1n \sum_{i=1}^n \wh {\sigma}_i^2(\theta), \text{  where   }
\widehat{\sigma}_{i}^{2}(\theta):= \frac1N\sum_{j=1}^{N} \Big ( \frac{ \omega_{i}(\a_{i}^{(j)},\theta)^{2}}{   \widehat{p}_{N,i}(y_{i}|\theta)^2   } -1\Big )  ,  \label{eq_vhat}\\
\wh { \mathbb{E}}(z)& :=-\frac{n\wh {\psi}_N(\theta,n)}{2N} \text{  and  }
\wh { \mathbb{V}} \left(  z\right)  : =\frac{ n\wh {\psi}_N (\theta,n)}{N} . \label {eq: est mom}
\end{align}

Proposition~\ref{prop:large sample z}  gives the following asymptotic results for the error $z$. Its proof is in the section~\ref{app supp: proof}.
\begin{proposition} \label{prop:large sample z}  
Suppose Assumptions~\ref{ass: exp} and \ref{ass: ineq} hold. Then,
\begin{enumerate}
\item [(i)]
\begin{align}\label{eqn:z_moments}
\mathbb{E} (z|n,N, \theta) & =-n\psi(\theta,n)/(2N) +O(n/N^{2}) \quad \text{and} \quad
\mathbb{V}\left( z | n,N,\theta\right) =n\psi(\theta,n)/N+O(n/N^{2}).
\end{align}
\item [(ii)]
\begin{align}
\widehat{\psi}_{N}(\theta,n)  &  =
\psi(\theta,n)+O\left(  N^{-1}\right)  +O_{p}(n^{-\frac{1}{2}}N^{-\frac
{1}{2}}).
\end{align}
\item [(iii)]
\begin{align} \label{eq: est momts}
\wh { \mathbb{E}}(z)& = \mathbb{E}(z ) +  O(nN^{-2}) +  O_p(n^{1/2}N^{-3/2}) \text{  and  }
\wh { \mathbb{V}}  = \mathbb{V}(z^{} ) + O(nN^{-2}) +  O_p(n^{1/2}N^{-3/2}).
\end{align}
\item [(iv)]
\begin{align}
\mathbb{V}(z)^{-1/2}\{z -\mathbb{E}(z  )\} & \overset{d}{\longrightarrow}%
\mathcal{N}(0,1)\text{ and }\widehat{\mathbb{V}}(z)^{-1/2}\{z-\widehat
{\mathbb{E}}(z)\} \overset{d}{\longrightarrow}\mathcal{N}(0,1).
\label{eq:IS_CLT}%
\end{align}
\end{enumerate}
\end{proposition}

\begin{remark}
\begin{itemize}
\item [(1)] Proposition~\ref{prop:large sample z} justifies Assumption~\ref{ass: normal}, with $\gamma^2(\theta) = n \psi(n,\theta) $.
\item [(2)]
The convergence of $z$ to normality is fast, because $N = O(n)$ and from Part (vii) of Lemma~\ref{lemma: prop eps} and
\eqref{eq: exp for z}, $z$ is a sum of $n$ terms, each of which converges to normality.
\end{itemize}
\end{remark}

\section{ Estimating the optimal number of particles statically and dynamically} \label{SSS: static dynamic number of particles}
In practice, we adopt two strategies for selecting $N$. The first strategy, which we call a static strategy, is to select the number of particles $N$ for a given value
${\wh \theta}$ of $\theta$ so that
that $N = N_{\sigma^2}({\wh \theta}) $ with $\frac{ \gamma^2({\wh \theta} ) }{ N_{\sigma^2}({\wh \theta})} = \sigma^2_{\rm opt} $. Usually,
${\wh \theta}$ is an estimate of a central value of $\theta$ such as the posterior mean. The static strategy is quite general and has the advantage that it does not require tuning the number of particles for each $\theta$. The dynamic strategy
selects the number of particles to try and target $\frac{ \gamma^2({ \theta} ) }{ N_{\sigma^2}({\theta})} = \sigma^2_{\rm opt} $ for each $\theta$.
Under suitable regularity conditions, Proposition~\ref{proposition:var_consist} shows  that asymptotically in the sample size $n$
the dynamic strategy is superior to the static strategy because the estimated variance $\wh\Var_{N,\t}(z)$ of $z$ under the dynamic strategy converges to its target value much in order $O_P(n^{-1})$, which is much faster than the static strategy which converges in order  $O_P(n^{-\frac12})$.
The proof is in section~\ref{app supp: proof}.

\begin{proposition}\label{proposition:var_consist}  
Suppose the following five conditions hold.
\begin{enumerate}
\item[(i)] For any in $\theta \in \Theta$,
\begin{equation}
\mathbb{V}(z|\theta,n,N)=n\psi(\theta,n)/N+O(n/N^{2}),
\label{eq: sigmaZsq asymptot}%
\end{equation}
where $\psi(\theta,n)$ is $O(1)$ in $n$ uniformly in $\theta \in \Theta$.
\item[(ii)] A consistent estimator $\widehat{\psi}_{N}(\theta,n)$ (in $N$) of
$\psi(\theta,n)$ is available such that, uniformly in $\theta$,
\beqn
\widehat{\psi}_{N}(\theta,n)=\psi(\theta,n)+O\left(  N^{-1}\right)
+O_{p}(n^{-\frac{1}{2}}N^{-\frac{1}{2}}).
\eeqn
\item[(iii)] The posterior $\pi(\theta)$ satisfies
$\lim_{n\rightarrow \infty}\mathbb{E}_{\pi}(\theta)=\overline{\theta}$ and
$\lim_{n\rightarrow \infty}n\times \mathbb{V}_{\pi}(\theta)=\Sigma$ .

\item[(iv)] We have a consistent estimator (in $n$) $\widehat{\theta}_{n}$ of
$\overline{\theta}$.

\item[(v)] $\psi(\theta,n)$ is a continuously differentiable function of
$\theta \in \Theta$.
\end{enumerate}
Then,
\begin{enumerate}
\item[A.] Under Conditions (i) to (v), suppose that an initial starting value
$N_{S}=O\left(  n\right)  $ is used and the number of particles $N_{\widehat
{\theta}_{n}}$ is chosen statically as $N_{\widehat{\theta}_{n}}%
:=n\widehat{\psi}_{N_{S}}(\widehat{\theta}_{n},n)/\kappa^{2}$, where
$\kappa^{2}$ is a prespecified target variance. Then,  $\mathbb{V}(z|\theta,n,N_{{\wh \theta}_n})=\kappa^{2}+O_{p}(n^{-1/2})$.
\item[B.] If conditions  (i) and (ii) hold  and  $N_{\theta}$ is chosen
dynamically as $N_{\theta}:=n\widehat{\psi}%
_{N_{\widehat{\theta}_{n}}}(\theta,n)/\kappa^{2}$, where $N_{\widehat{\theta
}_{n}}=O\left(  n\right)  +O_{p}\left(  1\right)  $, e.g. as in result A, then
$\mathbb{V}(z|\theta,n,N_{\theta})=\kappa^{2}+O_{p}(n^{-1}).$
\end{enumerate}
\end{proposition}
We note that Conditions (i) and (ii) are justified for the IS estimator by Parts~(i) and (ii) of Proposition~\ref{prop:large sample z}.
The dynamic estimator is preferred because it converges faster and does not
require any assumption on the behavior of the posterior $\pi(\theta)$.
The efficiency of the dynamic estimator is shown empirically in section \ref{sec:likeval1} of the article.
For part (iv),  we note that any consistent estimator, such
as a method of moments estimator, may be employed. However,
the dynamic estimator is more expensive than the static estimator as it also requires  computing $\widehat{\psi}_{N_{\widehat\t_n}}(\t,n)$.

\section{Proofs of the Propositions} \label{app supp: proof}
\begin{proof}[Proof of Proposition \ref{proposition:marg_lik}]
Denote $a=\tau_0$, $b=\tau_1\bar\gamma^2$, and $x=\sigma^{2}$. Then
\beqn
f(x) = \CT_{\wh p_{\IS^2}(y)}(\s^2) = \big(a+\frac{b}{x}\big)\big((v+1)e^x-1\big).
\eeqn
We write $f(x) = f_1(x)+bf_2(x)$ with $f_1(x)=a((v+1)e^x-1)$ and $f_2(x)=((v+1)e^x-1)/x$.
As $f_1(x)$ is convex, it is sufficient to show that $f_2(x)$ is convex.
To do so, we will prove that $f_2''(x)>0$ for $x>0$, because a differentiable function is convex
if and only if its second derivative is positive \citep[see, e.g.,][Chapter 3]{Bazaraa:2006}.

After some algebra
\beqn
f_2''(x)=v\frac{e^{x}}{x}\left[  \left(  1-\frac{1}{x}\right)
^{2}+\frac{1}{x^{2}}\right]  +\frac{e^{x}}{x}\left[  \left(  1-\frac{1}%
{x}\right)  ^{2}+\frac{1}{x^{2}}\left(  1-2e^{-x}\right)  \right].
\eeqn
The first term is clearly positive. We consider the second
term in the square brackets
\[
\left(  1-\frac{1}{x}\right)  ^{2}+\frac{1}{x^{2}}\left(  1-2e^{-x}\right)
=1-\frac{2}{x}+\frac{2(1-e^{-x})}{x^{2}}.
\]
Note that $1-e^{-x}>x-\frac{x^{2}}{2}$ for $x>0$ and so
\[
1-\frac{2}{x}+\frac{2}{x^{2}}(1-e^{-x})>1-\frac{2}{x}+\frac{2}{x^{2}}\left(
x-\frac{x^{2}}{2}\right)  =0.
\]
This establishes that $f_2^{\prime\prime}(x)>0$ for all $x>0$.

To prove (ii), for any fixed $v$, let $x_{\min}(v)$ be the minimizer of $f(x)$.
By writing
\beqn
f(x)=(v+1)\CT(x)\times \left(1-\frac{a+b/x}{(v+1)\CT^*(x)}\right),
\eeqn
with $\CT(x)$ defined by \eqref{eq: CT def},
we can see that $f(x)$ is driven by the factor $(v+1)\CT(x)$ as $v\to\infty$.
Hence $x_{\min}(v)$ tends to the $\sigma^2_\text{opt}$ in \eqref{eq:sigma_opt}
that minimizes $\CT(x)$, as $v\to\infty$.

Because $f'(x_{\min}(v))=0$ for any $v>0$,
\beq\label{eq: xmin v}
e^{x_{\min}(v)}(ax_{\min}^2(v)+bx_{\min}(v)-b)=-\frac{b}{v+1},\;\;\text{for}\; v>0.
\eeq
By taking the first derivative of both sides of \eqref{eq: xmin v},
we have
\beqn
x_{\min}'(v)=\frac{b}{(v+1)^2}\frac{e^{-x_{\min}(v)}}{ax_{\min}^2(v)+(b+2a)x_{\min}(v)}>0,\;\;\text{for any}\; v>0.
\eeqn
This follows that $x_{\min}(v)$ is an increasing function of $v$.
Furthermore, $x_{\min}'(v)\to0$ as $v\to\infty$, which establishes (iii).
\end{proof}
To obtain Proposition~\ref{prop:large sample z} we first obtain some preliminary results.
Define,
\begin{align}
\zeta_{ij} (\theta) & := \omega_i (\alpha_i^{(j)} , \theta)/p(y_i|\theta) -1  \label{eq: def zeta}\\
\varepsilon_{N,i} & :=  \frac{\sqrt{N}}{ \sigma_i(\theta)}  \left(  \frac{  \widehat{p}_{N,i}(y_{i}|\theta)}{ p_i(y_{i}%
|\theta)} -1\right)   = \frac{1}{\sqrt N \sigma_i(\theta) } \sum_{j=1}^N \zeta_{ij} (\theta)\label{eq: def vareps}
\end{align}
so that $\varepsilon_{N,i}$ is a sum of the i.i.d. random variables.

\begin{lemma} \label{lemma: prop eps}
Suppose Assumptions~\ref{ass: exp} and \ref{ass: ineq} hold. Then, for all $i = 1, \dots, n$, and $\theta$,
\begin{enumerate}
\item [(i)]
$\mathbb{E}(\zeta_{ij} (\theta)) =  0 $ and $\mathbb{V}(\zeta_{ij} (\theta)) =  \sigma_i^2(\theta) $.
\item [(ii)]
$ \mathbb{E}_{h_i} (|\zeta_{ij} (\theta)|^k ) \leq K_4 $ for all $j = 1, \dots, N,$ and $ k=1, \dots, 4$, with $K_4> 0 $.
\item [(iii)]
\begin{align}
N\times \mathbb{V}_{h_i} \left \{  \widehat{p}_{N,i}(y_{i}|\theta)/p(y_{i}|\theta)\right \} &  = \sigma_{i}^{2}(\theta)
 \label{eq:moments_ISt}
 \end{align}
\item [(iv)]
$1/K_5  < \sigma_i(\theta)^2 < K_5$, where $K_5 > 1$ is a constant.
\item [(v)]
$\mathbb{E}( \varepsilon_{N,i}| \theta)  = 0 $, $\mathbb{V}( \varepsilon_{N,i}|\theta) = 1$,
$\mathbb{E}( \varepsilon_{N,i}^3| \theta)  = O(N^{-1/2}) $,  and $\mathbb{E}(| \varepsilon_{N,i}|^k| \theta) < K_5$, for $ k = 1, \dots, 4$,
where $K_5> 0 $.
 \item [(vi)]
\begin{align}
\frac{\mathbb{E}( \varepsilon_{N,i}^3|\theta) }  {\big ( \mathbb{E}( \varepsilon_{N,i}^2|\theta)\big )^{3/2}  }
& \rightarrow 0  \text{  as   }  N \rightarrow \infty. \label{eq: clt condn 1}
\end{align}
 \item [(vii)]
$\varepsilon_{N,i}\overset{d}{\rightarrow}\mathcal{N}(0,1)$ as $N\rightarrow \infty$.
\end{enumerate}
\begin{proof}
Parts (i) and (ii)  follow from Assumption~\ref{ass: ineq}, Assumption~\ref{ass: exp} and \eqref{eq: sigma_i}.
Part (iii) follows from the definition \eqref{eq: def vareps}.
Part (iv) follows from Assumption~\ref{ass: ineq}. To obtain part (v), we note that $\mathbb{E}( \varepsilon_{N,i}| \theta)  = 0 $,
$\mathbb{V}( \varepsilon_{N,i}|\theta) = 1$, follow from the definition of
$\varepsilon_{N,i}$ and Part (i). It is straightforward to show that $\mathbb{E}( \varepsilon_{N,i}^3|\theta) = N\mathbb{E}(\zeta_{ij} (\theta)^3)/(N^{3/2} \sigma_i(\theta)^3)  = O(N^{-1/2} )$ by Part (i).
Part (v) follows from the definition of
$\varepsilon_{N,i}$ Part (i), and Assumption~~\ref{ass: ineq}.
The denominator on the left side of \eqref{eq: clt condn 1} is
$O(1)$. The numerator is $O(N^{-1/2})$ by part (v). Part (vi) follows. Part (vii) holds  because
$\varepsilon_{N,i}$ is a sum of i.i.d. random variables, Part (v) and Theorem 5, p. 194, of \cite{stirzaker2001probability}.
\end{proof}
\end{lemma}
We write the error $z$ in \eqref{eq: exp for z} in terms of the
$\varepsilon_{N,i}$ and expand as a third order Taylor series approximation with a remainder term,
\begin{align}
z & = \sum_{i=1}^{n}
\log \left(  1+\sigma_{i}(\theta)\varepsilon_{N,i}/\sqrt{N}\right)
\notag\\
&= \sum_{i=1}^{n}\left(  \frac{\sigma_{i}(\theta)\varepsilon_{N,i}%
}{\sqrt{N}}
-\frac{\sigma_{i}^{2}(\theta)\varepsilon_{N,i}^{2}}{2N}%
 +\frac{\sigma_{i}^{3}(\theta)\varepsilon_{N,i}^{3}}{3N\sqrt{N}} +
R_{N,i}(\theta)
  \right ) . \label{eq:z_decomp}
\end{align}
The remainder term is $R_{N,i}(\theta) \leq
\sigma_{i}^{4}(\theta)\varepsilon_{N,i}^{4}/ 4N^{2}$.

\begin{lemma} \label{lemma: sigmahat_i}
We can express $\wh {\sigma}_i^2 (\theta) = \sigma_i^2(\theta) + c_i(\theta)/N + \xi_i(\theta)/\sqrt{N} $,
where $\mid c_i(\theta) \mid < K_5$ and $\mathbb{V}_{h_i}(\xi_i(\theta)) < K_5$, with $K_5> 0 $,  for all $i=1, \dots, n$ and $\theta$.
\begin{proof}
\begin{align*}
\Big ( \frac{{\wh p_{N,i}}(y_i|\theta)}{p_i(y_i|\theta)}   \Big )^2   = \Big ( \frac1N \sum_{j=1}^N \big ( 1 + \zeta_{ij}(\theta) \big )\Big)^2 & =
 1 + \frac{1}{N^2} \Big ( \sum_{j=1}^N \zeta_{ij}(\theta)  \Big ) ^2 + \frac2N \sum_{j=1}^N \zeta_{ij}(\theta)\\
\text{ so that} \quad
\mathbb{E}_{h_i}\Big ( \frac{{\wh p_{N,i}}(y_i|\theta)}{p_i(y_i|\theta)}   \Big )^2  & = 1 + \frac{\sigma_i^2(\theta)}{N}\\
\text{and therefore we can write} \quad \Big ( \frac{{\wh p_{N,i}}(y_i|\theta)}{p_i(y_i|\theta)}   \Big )^2  & =   1 + \frac{\sigma_i^2(\theta)}{N}  + \frac{\xi_i^\ast(\theta) }{ \sqrt N}
\end{align*}
where $\mathbb{V}_{h_i} \big ( \xi^\ast (\theta)\big ) < K_6$, with $K_6 > 0 $ for all $\theta$ and $i = 1, \dots, N$.
Define,
\begin{align*}
{\wt \sigma}_i^2 (\theta) &  := \frac1N  \sum_{j=1}^N \bigg ( \frac{ w_i (\alpha_i^{(j),\theta}} {p_i(y_i(\theta) } \bigg )^2 -1  =
\frac1N \sum_{j=1}^N \zeta_{ij}(\theta)  ^2 + \frac2N \sum_{j=1}^N \zeta_{ij}(\theta) .
\end{align*}
We can then write ${\wt \sigma}_i^2 (\theta)  = \sigma_i^2(\theta)  + N^{-1/2} {\wt \xi}(\theta)$,
where $\mathbb{V}_{h_i} \big ( {\wt \xi} (\theta)\big ) < K_7$, with $K_7 > 0 $ for all $\theta$ and $i = 1, \dots, n$.

With some algebra, we can write
\begin{align*}
{\wh \sigma}_i^2(\theta) & = \Big ( \frac{{\wh p_{N,i}}(y_i|\theta)}{p_i(y_i|\theta)}   \Big )^{-2} \Bigg (   {\wt \sigma}_i^2 (\theta) + \bigg  ( \Big (1 -  \frac{{\wh p_{N,i}}(y_i|\theta)}{p_i(y_i|\theta)}   \Big )^{2}                     \Bigg )
\end{align*}
and the result follows.
\end{proof}\end{lemma}
\begin{proof}[Proof of Proposition~\ref{prop:large sample z} ]
Part (i): The expression for $\mathbb{E}(z|n,N,\theta)  $  follows from \eqref{eq:z_decomp}. To obtain the expression for $\mathbb{V}(z|n,N,\theta)  $,
we write $z$ as a first order Taylor series expansion plus remainder,
similarly to \eqref{eq:z_decomp},
\begin{align} \label{eq: exp z1}
z & = z_1 - \sum_{i=1}^n S_{N,i} (\theta), \quad \text{where} \quad z_1 = N^{-1/2} \sum_{i=1}^{n} \sigma_{i}(\theta)\varepsilon_{N,i}%
 \quad \text{and} \quad
S_{N,i}(\theta) \leq \sigma_i^2(\theta) \varepsilon_{N,i}^2 /(2N).
\end{align}
Part (ii) follows from Lemma~\ref{lemma: sigmahat_i}. Part (iii) follows from Part (ii). To obtain Part (iv), it is sufficient to prove the central limit theorem for
$z_1$, ,which is a sum of independent random variables. Now, by Parts (iv) and (v) of Lemma \ref{lemma: prop eps},
\begin{align*}
\frac{\sum_{i=1}^n \sigma_i^3(\theta) \mathbb{E}(\varepsilon_{N,i}^3)/N^{3/2}} { \Big ( \sum_{i=1}^n \sigma_i^2(\theta) \mathbb{E}(\varepsilon_{N,i}^3)/N^{1/2}\Big )^{3/2} } & = O(N^{-1/2}) \frac{\sum_{i=1}^n \sigma_i^3(\theta)} { \Big ( \sum_{i=1}^n \sigma_i^2(\theta)\Big )^{3/2} } \longrightarrow 0 \quad \text{as} \quad n  \longrightarrow \infty.
\end{align*}
The central limit theorem now follows from Theorem 5, p. 194, of \cite{stirzaker2001probability}
\end{proof}

We note that as $N$ is $O(n)$, the central limit
operates very quickly in $n$ upon the expression for $z_1$ in \eqref{eq: exp z1} as the
summation is over an increasing number of terms, each of which satisfies a
central limit.

\begin{proof}
[Proof of Proposition \ref{proposition:var_consist}]Assumption (ii) yields%
\[
N_{\widehat{\theta}_{n}}=n\widehat{\psi}_{N_{S}}(\widehat{\theta}%
_{n},n)/\kappa^{2}=n\psi(\widehat{\theta}_{n},n)/\kappa^{2}+O\left(
n/N_{S}\right)  +O_{p}(n^{\frac{1}{2}}N_{S}^{-\frac{1}{2}}).
\]
As $N_{S}$ is chosen proportional to $n$ then consequently, $N_{\widehat
{\theta}_{n}}$ is of order $n$. Using assumption (i), the achieved variance at
any ordinate $\theta$ with $N_{\widehat{\theta}_{n}}$ particles is%
\begin{equation}
\mathbb{V}(z|\theta,n,N_{\widehat{\theta}_{n}})=n\psi(\theta,n)/N_{\widehat
{\theta}_{n}}+O(n/N_{\widehat{\theta}_{n}}^{2})=\kappa^{2}\psi(\theta
,n)/\psi(\widehat{\theta}_{n},n)+O_{p}(n^{-1}). \label{eq:var_ratioeq}%
\end{equation}
Using Assumptions~(iii) to (v), we obtain from a Taylor expansion of
$\psi(\theta,n)$ around $\widehat{\theta}_{n}$ in (\ref{eq:var_ratioeq}) that
\[
\mathbb{V}(z|\theta,n,N_{\widehat{\theta}_{n}})=\kappa^{2}+O_{p}%
(n^{-\frac{1}{2}}).
\]
This establishes part A. For part B, we may write
\[
N_{\theta}=n\widehat{\psi}_{N_{\widehat{\theta}_{n}}}(\theta,n)/\kappa
^{2}=n\psi(\theta,n)/\kappa^{2}+O_{p}\left(  1\right)  ,
\]
since $N_{\widehat{\theta}_{n}}$ is of order $n$. Hence, the achieved variance
at a given $\theta$ and $n$ is given by
\[
\mathbb{V}(z|\theta,n,N_{\theta})=n\psi(\theta,n)/N_{\theta}+O(n/N_{\theta
}^{2})=\kappa^{2}+O_{p}(n^{-1}).
\]

\end{proof}

\section{Discussion of complementary methodology} \label{S: extra methods}
\subsection{An annealed importance sampling approach for constructing the importance density $g_{\IS}(\theta)$ }\label{SS_supp:proposal}
The second approach for constructing the density $g_{IS}$
is based on the annealed importance sampling for models with latent variables (AISEL) method
proposed by \cite{duan:fulop:2015} for state space models and independently by \cite{tran2014_aisel} for general models with intractable likelihoods,
which is based on the annealed importance sampling (AIS) approach of \cite{Neal:2001}, but where the likelihood is now estimated.
We advocate taking a small number of steps at a high temperature to obtain $\theta$ particles to train the $g_{\rm IS}(\theta)$ proposal.
The overall AISEL proposal is far more expensive than IS$^2$, but taking a few steps of AISEL at high temperatures requires relatively few particles and so is relatively cheap. The AISEL method produces a sequence of weighted samples
of $\theta$'s which are first drawn from an easily-generated distribution,
and then moved towards the target distribution through Markov kernels.
A very good proposal density $g_\IS(\t)$ that is heavier tailed than the target density can then be obtained
by fitting a mixture of $t$ densities to the resulting empirical density for  $\theta$.

\subsection{Effective sample size} \label{S: ESS}
The efficiency of the proposal density $g_\IS(\t)$ in the standard IS procedure \eqref{e:standardISestimator}
is often measured by the effective sample size defined as \citep[p.35]{Liu:2001}
\beq
\text{ESS}_\IS: = \frac{M}{1+\Var_{g_\IS}(w(\t)/\E_{g_\IS}[w(\t)])}
=\frac{M\big(\E_{g_\IS}[w(\t)]\big)^2}{\E_{g_\IS}[w(\t)^2]}.
\eeq
A natural question is how to measure the efficiency
of $g_\IS(\t)$ in the $\IS^2$ procedure.
In the $\IS^2$ context, the proposal density is $\wt g_\IS(\t,z)=g_\IS(\t)g_N(z|\t)$,
with the weight $\wt w(\t,z)=w(\t)e^z$ given in \eqref{eq:weight_IS2}.
Hence, the ESS for $\IS^2$ is $\text{ESS}_{\IS^2} = {M\big(\E_{\wt g_\IS}[\wt w(\t,z)]\big)^2}/{\E_{\wt g_\IS}[\wt w(\t,z)^2]}$,
which is estimated by $\wh{\text{ESS}}_{\IS^2} = {(\sum_{i=1}^M\wt w(\t_i))^2}\big/{\sum_{i=1}^M\wt w(\t_i)^2}$.
Under Assumption 2, $\E_{\wt g_\IS}[\wt w(\t,z)] = \E_{g_\IS}[w(\t)]$
and $\E_{\wt g_\IS}[\wt w(\t,z)^2] = e^{\s^2}\E_{g_\IS}[w(\t)^2]$.
Hence,
\beq\label{eq:ESS}
\text{ESS}_\IS= \exp({\s^2})\times\text{ESS}_{\IS^2}.
\eeq
If the number of particles $N(\t)$ is tuned to target a constant variance $\s^2$ of $z$,
then \eqref{eq:ESS} enables us
to estimate $\text{ESS}_\IS$ as if the likelihood was given.
This estimate is a useful  measure of the efficiency of the proposal density $g_\IS(\t)$ in the $\IS^2$ context.

\section{Sensitivity of the computing time to $\Var_{g_\IS}[W_i]$
in estimating the marginal likelihood}\label{SS: optimal N marginal likel}
This table is referred to in section~\ref{SS: optimal N marg likel} and shows that the ratio
$\CT_{\wh p_{\IS^2}(y)}(\s^2_\text{opt})/\CT_{\wh p_{\IS^2}(y)}(\s^2_{\min})$ is insensitive to $v$ and so justifies
using $\s^2_\text{opt}$ instead of $\s^2_\text{min}$  when estimating the marginal likelihood.
\begin{table}[h]
\centering%
\caption{Sensitivity of the computing time to $\Var_{g_\IS}[W_i]$
in estimating the marginal likelihood.}
\label{tab:marg_lik}%
\begin{tabular}
[c]{c|c|c}\hline\hline
$v=\Var_{g_\IS}[W_i]$ & $\sigma_{\min}^2(v)$ & $\CT_{\wh p_{\IS^2}(y)}(\s^2_\text{opt})/\CT_{\wh p_{\IS^2}(y)}(\s^2_{\min})$\\
\hline
1	&0.12	&1.0199\\
5	&0.16	&1.0012\\
10	&0.16	&1.0003\\
100	&0.17	&1.0000\\    	
$\infty$&0.17	&1\\
\hline\hline
\end{tabular}
\end{table}
\section{Empirical performance of IS$^2$ for panel data.} \label{S: empirical perfor of IS$^2$ for panel data}
\section{Empirical comparison of the static and dynamic estimators of the optimal number of particles}\label{sec:likeval1}
This section investigates the empirical performance of the static and dynamic estimators of the optimal number of particles defined
in section~\ref{SSS: static dynamic number of particles}, for the panel data models discussed in section~\ref{sec:gmnlmodel} of the main paper.

We generated 1,000 draws for the parameter vector $\theta$ from $g_\IS(\t)$. For each parameter combination, we obtained 100 independent likelihood estimates using the same fixed number of particles for all individuals in the panel and by targeting a log-likelihood variance of 1 and 0.5 for the MIXL model using  the static approach (fixed $N$)
to determining $N$ and targeting a variance of 0.5 for each $\theta$ using the dynamic approach. Similarly, for the GMNL model, the static approach with fixed $N = 8000$ and $N = 16000$ particles targets a variance of 2.0 and 1.0 respectively, while the dynamic approach targets a variance of 1.0 for each $\theta$.

Table \ref{tab:lik1} and
Figure~\ref{fig:lik1} summarize the results and show that dynamically  selecting  $N$ by targeting a
  constant $\sigma^2$ is effective, leading, on average, to approximately the correct variance for the
  log-likelihood estimates without substantial deviations from the target precision across different parameter combinations.
  In comparison, the static estimator of $N$ that uses a fixed number of particles leads to greater variability in precision across different draws of $\theta$, especially for the GMNL model. The results also show that the log-likelihood estimates are consistent with the normality assumption as required by the theory. Figure \ref{fig:optimal} shows the average number of particles required for obtaining the target variance $\Var(\log\hspace{1mm} \widehat{p}(y_i|\theta))\approx \sigma^2/I$ for each individual. The figure shows a substantial variation in the average optimal number of particles across individuals, in particular for the GMNL model.

Table \ref{tab:lik1} also provides estimates of the required quantities for calculating the optimal $\sigma^2$ in \eqref{eq:sigma_opt}.
For the MIXL model, since the average variance when $N=24$ is 1.068, we estimate $\ov{\gamma^2}=1.068\times24=25.63$.
As the likelihood evaluation time using $N$ particles is determined as $\tau_0+\tau_1N$,
from the computing times at $N=24$ and $N=48$,
we obtain $\tau_0=0.067$ and $\tau_1=8.97\times10^{-5}$ seconds.
Using \eqref{eq:sigma_opt}, we conclude that $\sigma^2_\text{opt}$ for the MIXL model is approximately 0.17.
The optimal variance $\sigma^2_\text{opt}=1$ for the GMNL case because $\tau_0=0$ as there is no overhead cost in obtaining the state proposal.

\begin{table}[!h]
\caption{{The table shows the average variance, skewness and kurtosis of log-likelihood estimates over 1,000 draws of $\t$.
The table shows the average sample variance, the standard deviation of the sample variance, the skewness and kurtosis of the  estimates
of the log-likelihood across the different draws for $\theta$, as well as the proportion of parameter combinations for which the Jarque-Bera test rejects the null hypothesis of the normality of the log-likelihood estimates at the $5\%$ level.
For the MIXL model, the columns labeled $N = 24$ and $N = 48$ correspond to the static estimator of $N$ that targets a variance of 1 and $0.5$ respectively. The column headed $\sigma^2 \approx 0.5$ correspond to the dynamic estimator of $N$ which targets a variance of $0.5$.
A similar explanation holds of the GMNL estimator and the columns headed $N = 8000$ (static estimator with a target variance of 2.0), $ N= 16000$ (static estimator with a target variance of 1.0) and $\sigma \approx 1$ (dynamic estimator with a target of variance of 1.0.
The last row of the table reports computing times based on
a computer equipped with an Intel Xeon 3.40 GHz processor with four cores.
}
}\label{tab:lik1}
\begin{center}
\begin{tabular}{lccclccc}
\hline\hline
 & \multicolumn{ 3}{c}{MIXL (EIS)} &  & \multicolumn{ 3}{c}{GMNL (natural sampler) } \\
 & $N=24$ & $N=48$ & $\sigma^2\approx 0.5$ &  & $N=8,000$ & $N=16,000$ & $\sigma^2\approx 1$ \\
  \cline{2-4} \cline{6-8}
Variance & 1.068 & 0.545 & 0.482 &  & 1.986 & 1.029 & 1.023 \\
SD of variance & 0.159 & 0.079 & 0.055 &  & 0.500 & 0.295 & 0.162 \\
Skewness & 0.035 & 0.051 & 0.024 &  & 0.002 & 0.002 & -0.043 \\
Kurtosis & 3.017 & 3.035 & 2.991 &  & 2.929 & 2.928 & 2.953 \\
JB rejections (5\%) & 0.070 & 0.090 & 0.060 &  & 0.046 & 0.037 & 0.051 \\
Time (sec) & 0.069 & 0.071 & 0.072 &  & 0.916 & 1.860 & 2.084 \\
\hline\hline
\end{tabular}
\end{center}
\end{table}

\begin{figure}[h]
    \begin{center}
        \subfigure[$N=16,000$]{%
           \includegraphics[scale=0.35]{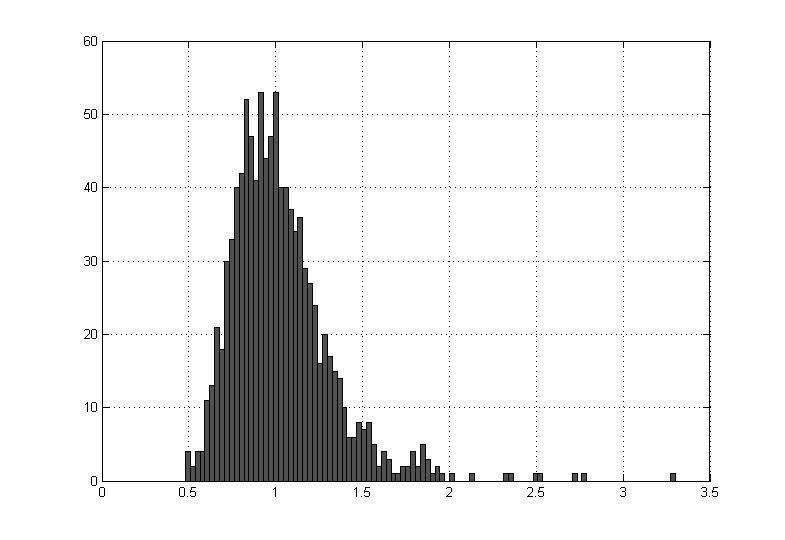}
        }%
        \subfigure[Targeting $\sigma^2=1$ for each draw of $\theta$]{%
\           \includegraphics[scale=0.35]{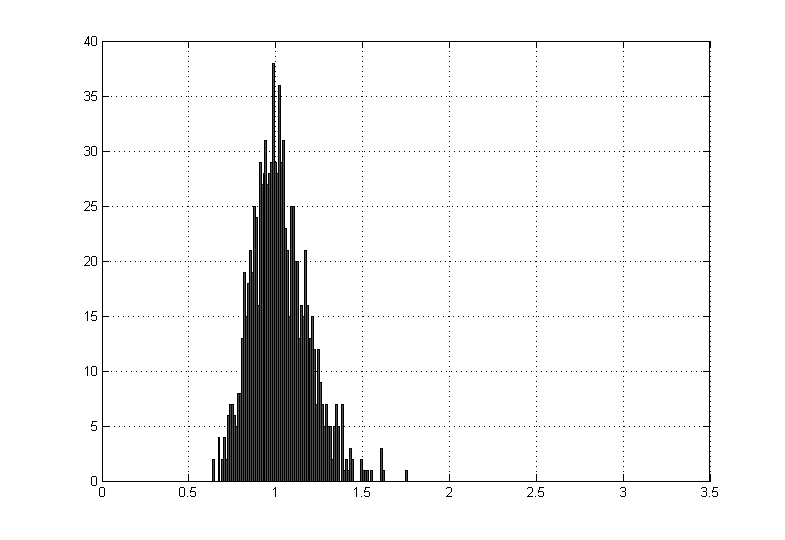}
        }%
    \end{center}
    \caption{The setup is the same as for Table \ref{tab:lik1} but refers only to the GNML model.
     Panel (a)
 displays the histogram of the sample variances  when a static estimator of $N$ with $N=16,000$ and targeting
 $\sigma^2 = 1 $ is used across the $1,000$ draws for $\theta$. Panel (b) displays the histogram of the sample variances  when the dynamic estimator of $N$ targets a variance of 1.0 for each $\theta$. }\label{fig:lik1}
\end{figure}

\begin{figure}
 \begin{center}
        \subfigure[MIXL]{%
           \includegraphics[scale=0.35]{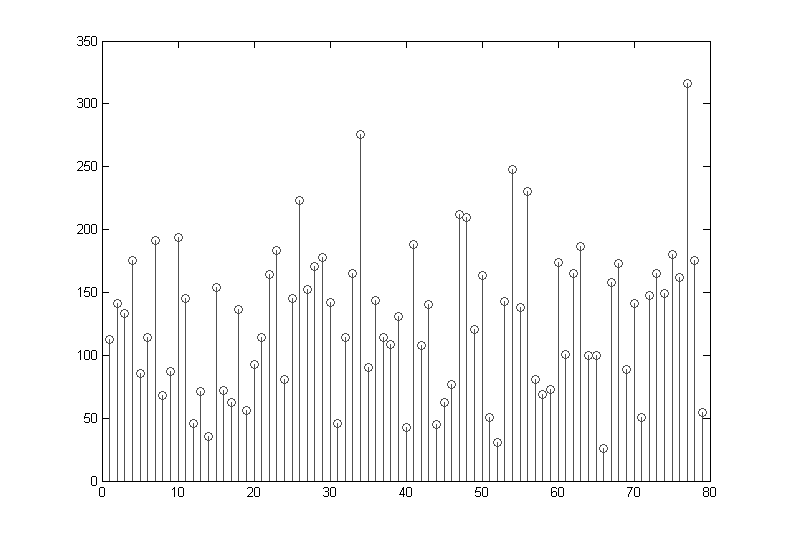}
       }%
        \subfigure[GMNL]{%
           \includegraphics[scale=0.35]{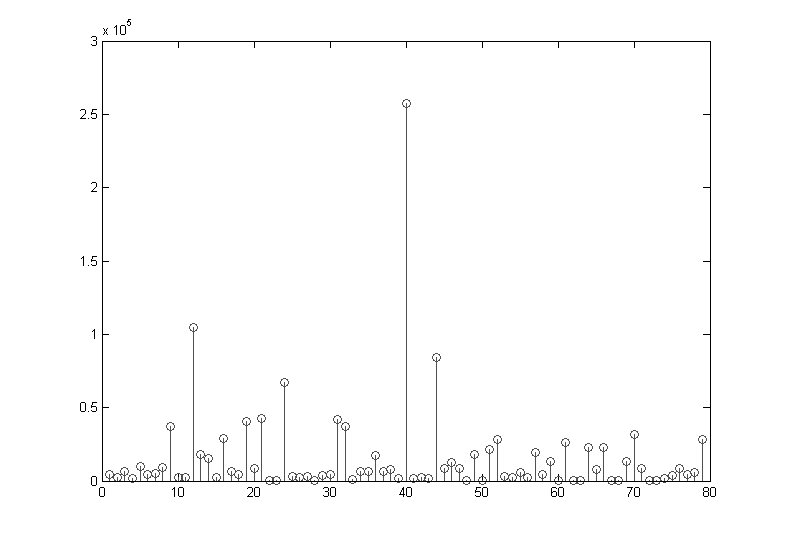}
        }%
    \end{center}
    \caption{The setup is the same as for Table \ref{tab:lik1}. The average number of particles for each individual when targeting $\sigma^2_\text{opt}$ using the dynamic estimator of $N$.}\label{fig:optimal}
\end{figure}

\subsection{Some further comparison of theoretical and empirical of performance of IS$^2$ for panel data}\label{SS: theoretical and empr compar panel data}
To compare different implementations of IS$^2$, we generate $M=50,000$ draws from the parameter proposal $g_\IS(\theta)$ and run the IS$^2$  by targeting different variances of the log-likelihood estimates. The target variances are $\sigma^2=0.05$, 0.2, 0.75, 1 for the MIXL model and $\sigma^2=0.5$, 1, 1.5, 2 for the GMNL model.  We estimate the Monte Carlo (MC) variances of the posterior means under each implementation by the bootstrap method. Tables \ref{tab:mixlexample1} and \ref{tab:example1} show the relative estimated MC variances, the total actual computing time in minutes, and the relative estimated time normalized variance \eqref{eq:TNV} for the MIXL and GMNL models respectively. We also report for reference the theoretical relative MC variances and TNV (see  \eqref{eq:Var_varphi} and \eqref{eq:TNV}).
\beqn
\frac{\Var(\wh\varphi_{\IS^2})|_{\s^2=\s_1^2}}{\Var(\wh\varphi_{\IS^2})|_{\s^2=\s_2^2}}=\frac{\exp(\sigma^2_1)}{\exp(\sigma^2_2)}\quad
\text{and}\quad
\frac{\text{TNV}(M,\s_1^2)}{\text{TNV}(M,\s_2^2)}=\frac{\exp(\sigma^2_1)(\tau_0+\tau_1\bar\gamma^2/\sigma^2_1)}{\exp(\sigma^2_2)(\tau_0+\tau_1\bar\gamma^2/\sigma^2_2)}.
\eeqn

The results show that the relative estimated MC variances are on average close to their theoretical values, even though there is a large degree of variability in the estimates across the parameters (especially for higher values of $\sigma^2$). The estimated relative TNV are approximately consistent with their theoretical values for the optimal precision of the log-likelihood estimates. Finally,  $\widehat{\ESS}_{\IS^2}/\widehat{\ESS}_{\IS} \approx  0.84$ for the MIXL model so that IS$^2$ is only 16\% less
efficient than IS. The corresponding figure is 0.37 for the GMNL model, so here IS$^2$ is 63\% less efficient than IS as we used a very simple importance
sampler.
\begin{table}[!tbp]
\begin{center}
\caption{The MIXL model: The table shows the relative variances for posterior mean estimation using IS$^2$ based on $M=50,000$ draws for $\theta$ and different $\sigma^2$.}\label{tab:mixlexample1}
\vspace{1pt}
\begin{threeparttable}
\begin{tabular}{lcccc}
\hline\hline
\multicolumn{1}{c}{} & $\sigma^2\approx 0.05$ & $\sigma^2\approx 0.17$ & $\sigma^2\approx 0.5$ & $\sigma^2\approx 0.75$ \\
\cline{2-5}
$\beta_{01}$ & 0.642 & 1.000 & 2.142 & 2.533 \\
$\beta_1$ & 0.665 & 1.000 & 2.098 & 2.301 \\
$\beta_2$ & 0.695 & 1.000 & 1.985 & 2.408 \\
$\beta_3$ & 0.579 & 1.000 & 1.705 & 1.982 \\
$\beta_4$ & 0.649 & 1.000 & 1.730 & 2.174 \\
$\beta_5$ & 0.630 & 1.000 & 1.605 & 3.060 \\
$\sigma_{01}$ & 0.590 & 1.000 & 2.232 & 2.668 \\
$\sigma_1$ & 0.597 & 1.000 & 1.774 & 2.103 \\
$\sigma_2$ & 0.635 & 1.000 & 1.898 & 2.440 \\
$\sigma_3$ & 0.619 & 1.000 & 1.802 & 3.006 \\
$\sigma_4$ & 0.926 & 1.000 & 1.420 & 2.757 \\
Average MC & 0.657 & 1.000 & 1.854 & 2.494 \\
Theoretical MC & 0.861 & 1.000 & 1.350 & 1.733 \\
Time (minutes) & 15.85 & 10.32 & 9.27 & 8.93 \\
Estimated TNV & 1.01 & 1.00 & 1.67 & 2.16 \\
Theoretical TNV & 1.25 & 1.00 & 1.23 & 1.54 \\
\hline\hline
\end{tabular}
\end{threeparttable}
\end{center}
\end{table}

\begin{table}[!tbp]
\begin{center}
\caption{The GMNL model: The table shows the relative variances for posterior mean estimation using IS$^2$ based on $M=50,000$ draws for $\theta$ and different $\sigma^2$.}\label{tab:example1}
\vspace{1pt}
\begin{threeparttable}
\vspace{4pt}
\begin{tabular}{lcccc}
\hline\hline
\multicolumn{1}{c}{} & $\sigma^2\approx 0.5$ & $\sigma^2\approx 1$ & $\sigma^2\approx 1.5$ & $\sigma^2\approx 2$ \\
\cline{2-5}
$\beta_0$ & 0.618 & 1.000 & 1.200 & 2.045 \\
$\beta_1$ & 0.744 & 1.000 & 1.295 & 2.220 \\
$\beta_2$ & 0.766 & 1.000 & 1.440 & 1.736 \\
$\beta_3$ & 0.873 & 1.000 & 1.553 & 2.398 \\
$\beta_4$ & 0.714 & 1.000 & 1.368 & 2.019 \\
$\beta_5$ & 0.658 & 1.000 & 1.198 & 2.060 \\
$\sigma_0$ & 0.638 & 1.000 & 1.342 & 1.620 \\
$\sigma_1$ & 0.614 & 1.000 & 1.490 & 2.061 \\
$\sigma_2$ & 0.729 & 1.000 & 1.412 & 1.942 \\
$\sigma_3$ & 0.884 & 1.000 & 1.507 & 2.435 \\
$\sigma_4$ & 0.824 & 1.000 & 1.425 & 2.229 \\
$\delta$ & 0.700 & 1.000 & 1.383 & 2.005 \\
$\gamma$ & 0.686 & 1.000 & 1.324 & 1.776 \\
Average MC & 0.727 & 1.000 & 1.380 & 2.042 \\
Theoretical MC & 0.607 & 1.000 & 1.649 & 2.718 \\
Time (minutes) & 535 & 262 & 178 & 138 \\
Estimated TNV & 1.486 & 1.000 & 0.939 & 1.081 \\
Theoretical TNV & 1.213 & 1.000 & 1.099 & 1.359 \\
\hline\hline
\end{tabular}
\end{threeparttable}
\end{center}
\end{table}

\begin{table}[h]
\begin{center}
\caption{The table presents estimates of selected posterior statistics based on $M=50,000$ importance samples for the parameters. The MC standard errors are in brackets.
$\widehat{\ESS}_{\IS}$ means the estimate of the equivalent sample size for the IS method calculated and interpreted as in section~\ref{S: ESS} of the supplementary material. The corresponding
$\widehat{\ESS}_{\IS^2} \approx  \widehat{\ESS}_{\IS}/ \exp( \sigma_{\rm opt}^2)$, where $\sigma_{\rm opt}^2$ is  the optimal variance of the log likelihood.  }
\label{tab:posterior}
\begin{threeparttable}
\begin{tabular}{lcccclcccc}
\hline\hline
 & \multicolumn{ 4}{c}{MIXL} &  & \multicolumn{ 4}{c}{GMNL} \\
 & Mean & SD & \multicolumn{ 2}{c}{90\% CI} &  & Mean & SD & \multicolumn{ 2}{c}{90\% CI} \\
\cline{2-5}\cline{7-10}
$\beta_{01}$ & $\underset{[0.002]}{-1.032}$ & $\underset{[0.002]}{0.430}$ & $\underset{[0.006]}{-1.731}$ & $\underset{[0.006]}{-0.324}$ &  & $\underset{[0.006]}{-1.183}$ & $\underset{[0.004]}{0.458}$ & $\underset{[0.010]}{-1.927}$ & $\underset{[0.015]}{-0.412}$ \\[10pt]
$\beta_1$ & $\underset{[0.001]}{0.831}$ & $\underset{[0.001]}{0.234}$ & $\underset{[0.003]}{0.451}$ & $\underset{[0.003]}{1.223}$ &  & $\underset{[0.005]}{1.092}$ & $\underset{[0.007]}{0.348}$ & $\underset{[0.006]}{0.584}$ & $\underset{[0.014]}{1.702}$ \\[10pt]
$\beta_2$ & $\underset{[0.001]}{-1.405}$ & $\underset{[0.002]}{0.352}$ & $\underset{[0.005]}{-1.986}$ & $\underset{[0.005]}{-0.827}$ &  & $\underset{[0.006]}{-1.763}$ & $\underset{[0.005]}{0.480}$ & $\underset{[0.013]}{-2.596}$ & $\underset{[0.007]}{-1.039}$ \\[10pt]
$\beta_3$ & $\underset{[0.002]}{3.249}$ & $\underset{[0.002]}{0.400}$ & $\underset{[0.005]}{2.618}$ & $\underset{[0.007]}{3.931}$ &  & $\underset{[0.014]}{4.091}$ & $\underset{[0.021]}{0.812}$ & $\underset{[0.013]}{3.023}$ & $\underset{[0.043]}{5.562}$ \\[10pt]
$\beta_4$ & $\underset{[0.001]}{1.323}$ & $\underset{[0.001]}{0.224}$ & $\underset{[0.003]}{0.963}$ & $\underset{[0.003]}{1.696}$ &  & $\underset{[0.006]}{1.658}$ & $\underset{[0.008]}{0.380}$ & $\underset{[0.012]}{1.106}$ & $\underset{[0.023]}{2.358}$ \\[10pt]
$\beta_5$ & $\underset{[0.000]}{-0.219}$ & $\underset{[0.000]}{0.069}$ & $\underset{[0.001]}{-0.334}$ & $\underset{[0.001]}{-0.106}$ &  & $\underset{[0.001]}{-0.271}$ & $\underset{[0.001]}{0.093}$ & $\underset{[0.003]}{-0.430}$ & $\underset{[0.002]}{-0.128}$ \\[10pt]
$\sigma_{01}$ & $\underset{[0.002]}{3.015}$ & $\underset{[0.002]}{0.419}$ & $\underset{[0.004]}{2.386}$ & $\underset{[0.007]}{3.742}$ &  & $\underset{[0.006]}{3.121}$ & $\underset{[0.005]}{0.464}$ & $\underset{[0.009]}{2.418}$ & $\underset{[0.015]}{3.927}$ \\[10pt]
$\sigma_1$ & $\underset{[0.001]}{1.453}$ & $\underset{[0.001]}{0.228}$ & $\underset{[0.002]}{1.103}$ & $\underset{[0.003]}{1.845}$ &  & $\underset{[0.005]}{1.693}$ & $\underset{[0.006]}{0.360}$ & $\underset{[0.005]}{1.195}$ & $\underset{[0.013]}{2.320}$ \\[10pt]
$\sigma_2$ & $\underset{[0.002]}{2.564}$ & $\underset{[0.002]}{0.377}$ & $\underset{[0.003]}{1.996}$ & $\underset{[0.006]}{3.231}$ &  & $\underset{[0.009]}{2.856}$ & $\underset{[0.014]}{0.587}$ & $\underset{[0.008]}{2.075}$ & $\underset{[0.034]}{3.938}$ \\[10pt]
$\sigma_3$ & $\underset{[0.001]}{2.656}$ & $\underset{[0.002]}{0.352}$ & $\underset{[0.003]}{2.124}$ & $\underset{[0.007]}{3.277}$ &  & $\underset{[0.009]}{3.005}$ & $\underset{[0.012]}{0.592}$ & $\underset{[0.007]}{2.195}$ & $\underset{[0.024]}{4.057}$ \\[10pt]
$\sigma_4$ & $\underset{[0.001]}{1.136}$ & $\underset{[0.001]}{0.233}$ & $\underset{[0.004]}{0.768}$ & $\underset{[0.003]}{1.531}$ &  & $\underset{[0.005]}{1.247}$ & $\underset{[0.005]}{0.337}$ & $\underset{[0.013]}{0.747}$ & $\underset{[0.010]}{1.836}$ \\[10pt]
$\delta$ & \multicolumn{1}{l}{} & \multicolumn{1}{l}{} & \multicolumn{1}{l}{} & \multicolumn{1}{l}{} &  & $\underset{[0.004]}{0.621}$ & $\underset{[0.004]}{0.183}$ & $\underset{[0.020]}{0.318}$ & $\underset{[0.007]}{0.928}$ \\[10pt]
$\gamma$ & \multicolumn{1}{l}{} & \multicolumn{1}{l}{} & \multicolumn{1}{l}{} & \multicolumn{1}{l}{} &  & $\underset{[0.004]}{0.170}$ & $\underset{[0.008]}{0.144}$ & $\underset{[0.001]}{0.014}$ & $\underset{[0.018]}{0.452}$ \\[10pt]
$\log\hspace{1mm}p(y)$& $\underset{[0.003]}{-981.24}$ &&&&&$\underset{[0.012]}{-978.27}$\\[10pt]
$\widehat{\ESS}_{\IS^2}/\widehat{\ESS}_{\IS}$ &  0.84  &&&&& 0.37\\
\hline\hline
\end{tabular}
\end{threeparttable}
\end{center}
\end{table}
\bibliographystyle{apalike}
\bibliography{references,marcelbib}
\end{document}